\definecolor{darkgray}{gray}{0.25}
\definecolor{lightblue}{rgb}{0.25,0.25,1}
\newcommand{\bS}{\mathbb{S}}
\newcommand{\ind}{\mathbf{1}}
\newcommand{\reals}{\mathbb{R}}
\newcommand{\naturals}{\mathbb{N}}
\newcommand{\M}{\mathcal{M}}
\newcommand{\N}{\mathcal{N}}
\newcommand{\cS}{\mathcal{S}}
\newcommand{\T}{\mathcal{T}}
\newcommand{\Y}{\mathcal{Y}}
\DeclareMathOperator*{\argmin}{argmin}
\newtheorem{definition}{Definition}
\newtheorem{lemma}{Lemma}
\newtheorem{proposition}{Proposition}
\newcommand{\seclabel}[1]{\label{sec:#1}}
\newcommand{\secref}[1]{Section~\ref{sec:#1}}
\newcommand{\figlabel}[1]{\label{fig:#1}}
\newcommand{\figref}[1]{Figure~\ref{fig:#1}}
\newcommand{\eqlabel}[1]{\label{eq:#1}}
\renewcommand{\eqref}[1]{Equation~\ref{eq:#1}}
\newcommand{\deflabel}[1]{\label{def:#1}}
\newcommand{\defref}[1]{Definition~\ref{def:#1}}
\newcommand{\lemlabel}[1]{\label{lem:#1}}
\newcommand{\lemref}[1]{Lemma~\ref{lem:#1}}
\newcommand{\thmlabel}[1]{\label{thm:#1}}
\newcommand{\thmref}[1]{Theorem~\ref{thm:#1}}
\newcommand{\algmlabel}[1]{\label{alg:#1}}
\newcommand{\algmref}[1]{Algorithm~\ref{alg:#1}}
\newcommand{\proplabel}[1]{\label{prop:#1}}
\newcommand{\propref}[1]{Proposition~\ref{prop:#1}}
\renewcommand{\output}{[a,b]}  
\newcommand{\ouru}{\textit{Duff}\xspace}
\newcommand{\uadmis}{u_d}
\newcommand{\ssmech}{\textsc{SmoothSens}\xspace}
\newcommand{\ssmeche}{\textsc{SmoothSens}$_{\epsilon}$\xspace}
\newcommand{\ssmeched}{\textsc{SmoothSens}$_{\epsilon,\delta}$\xspace}
\newcommand{\loglap}{\textsc{LaplaceLN}\xspace}
\newcommand{\sensu}{\Delta_u}
\newcommand{\GS}{\mathsf{GS}}
\renewcommand{\SS}{\mathsf{SS}}
\newcommand{\LS}{\mathsf{LS}}
\begin{document}

\title{\ouru: A {\it D}ataset-Distance-Based {\it U}tility {\it F}unction {\it F}amily for the Exponential Mechanism}

\author{%
  Jennifer Gillenwater \\
  Google Research \\
  111 8th Ave, New York, NY \\
  \texttt{jengi@google.com}
  \And
  Andr\'es Mu\~noz Medina \\
  Google Research \\
  111 8th Ave, New York, NY \\
  \texttt{ammedina@google.com}
 }

\maketitle

\begin{abstract}
We propose and analyze a general-purpose {\it d}ataset-distance-based {\it
  u}tility {\it f}unction {\it f}amily (\ouru) for differential privacy's
exponential mechanism.  Given a particular dataset and a statistic (e.g.,
median, mode), this function family assigns utility to a possible output $o$
based on the number of individuals whose data would have to be added to or
removed from the dataset in order for the statistic to take on value $o$. We
show that the exponential mechanism based on \ouru often offers provably higher
fidelity to the statistic's true value compared to existing differential privacy
mechanisms based on smooth sensitivity. In particular, \ouru is an affirmative
answer to the open question of whether it is possible to have a noise
distribution whose variance is proportional to smooth sensitivity and whose
tails decay at a faster-than-polynomial rate. We conclude our paper with an
empirical evaluation of the practical advantages of \ouru for the task of
computing medians.
\end{abstract}

\section{Introduction}
\seclabel{introduction}
The age of big data has brought with it an unprecedented ease-of-access to large
amounts of information.  From predicting traffic in cities to explaining voting
patterns across the country, readily available large datasets have made it
possible for researchers to analyze and model the behavior of people at a very
fine granularity.  Unrestricted access to this data, however, puts the privacy
of individuals at risk. Indeed, many datasets used today contain very sensitive
information about individuals, such as one's political affiliation or home
location. For this reason, analysts and machine learning practitioners must
ensure that the output of their research (models or aggregated statistics)
provide only information about the population as a whole and do not leak
personal details. One might feel, intuitively, that certain outputs are
inherently ``safe''.  For example, statistics aggregated over a sufficiently
large number of individuals, such as the median of a dataset that contains data
from thousands of people.  But in fact it has been proven~\cite{dinur2003pods},
and later verified in practice~\cite{abowd2018kdd}, that even with such
aggregated statistics one can discover very personal information.

A formal guarantee, called \emph{differential privacy}
(DP)~\cite{dwork2006icalp}, has become the standard way to ensure that
individual information is protected.  Examples of its use in industry and by
government entities can be found in Google~\cite{erlingsson2014ccs},
Apple~\cite{pease2016wwdc}, and the census bureau~\cite{abowd2018kdd}.  In
order to protect user information, a DP \emph{mechanism} usually perturbs the
output of an aggregation with noise. The amount of noise depends on the amount
of privacy we want to provide. There is an inherent trade-off here: more privacy
implies that the perturbed value will likely be further from the true
value. Proximity to the truth is known as the \emph{utility} of the mechanism.
The goal of DP research is to maximize utility while preserving a desired level
of privacy. This process has lead to the creation of a plethora of mechanisms
whose privacy-utility trade-offs are often difficult to compare.  Moreover, even
for the problem of releasing simple statistics such as the median, there is a
lack of thorough empirical evaluation.  This has made adoption of DP by
practitioners difficult.

The goal of this paper is to help simplify practitioners' work by
providing utility guarantees for DP's exponential
mechanism~\cite{mcsherry2007focs} when it is used with a
general-purpose dataset-distance-based utility function family, or
\ouru. \ouru is general enough to be useful for many tasks, but also
specific enough that we can prove meaningful theorems about its
privacy-utility trade-off.  The paper is organized as follows: We
begin by introducing the basic definitions of DP and discussing
previous work. In \secref{ouru} we introduce \ouru and provide an
analysis of its utility. This analysis shows a previously unknown
connection between the exponential mechanism and \emph{smooth
sensitivity}~\cite{nissim2007stoc}. We also instantiate \ouru for
several statistics.  In \secref{algorithms} we present a general
sampling algorithm for an exponential mechanism using \ouru.  Finally,
in \secref{experiments} we empirically evaluate on the task of
computing medians.

\section{Definitions}
\seclabel{def}
In this section we present some common concepts of DP. We will denote by $\bS$ a
universe of datasets. A dataset $\cS \in \bS$ is a collection of information
about individuals.  We assume that each individual contributes no more than one
value to a given $\cS$.
\begin{definition}
  Datasets $\cS, \cS' \in \bS$ are \textbf{neighbors} if $\cS$ can be obtained
  from $\cS'$ by adding or removing a single element.  We denote the neighbors
  of a dataset $\cS$ by $\N(\cS)$.
\end{definition}
\begin{definition}
 Let $\M$ denote a (possibly randomized) function mapping a dataset $\cS$ to an
 output in $\output$.  We say $\M$ is an \textbf{$(\epsilon, \delta)$-DP
 mechanism} if, for any two neighboring datasets $\cS$ and $\cS'$, and any set
 of outcomes $A \subseteq \output$, it holds that: $\Pr(\M(\cS) \in A) \leq
 e^{\epsilon} \Pr(\M(\cS') \in A) + \delta$.
\end{definition}
Given some target statistic of a dataset, $T : \bS \to \output$ (e.g., median,
mode), one common mechanism $\M$ that is often used to release a DP version of
$T$ is $\M(\cS) = T(\cS) + \textrm{noise}$.  The exact distribution and scale of
the noise required to satisfy the DP definition has been the subject of
substantial research. Much of this research has focused on understanding the
so-called \emph{sensitivity} of $T$.
\begin{definition}
\deflabel{local-sens}
Given a $T \colon \bS \to \output$ and an $\cS \in \bS$, the \textbf{local sensitivity} of $T$ at $\cS$ is:
  \begin{equation*}
    \LS(T, \cS) = \max_{\cS' \in \N(\cS)} |T(\cS) - T(\cS')|.
    \end{equation*}
\end{definition}
\begin{definition}
\deflabel{global-sens}
  Given a $T : \bS \to \output$, the \textbf{(global) sensitivity} of
    $T$ is: $\GS(T) = \max_{\cS \in \bS} \LS(T, \cS)$.
\end{definition}
Adding noise to $T$ with variance proportional to $\GS(T)$ is one way of
constructing a mechanism $\M$ that is DP.
\begin{proposition}[\textbf{Laplace mechanism}]
  \proplabel{laplace-mech}
  Let $T$ be a function with sensitivity $\GS(T)$.  Then the mechanism $\M$ that
  releases $\M(S) = T(S) + \frac{\GS(T)}{\epsilon} Z$, where $Z \sim Lap(0,
  1)$, is $(\epsilon, 0)$-DP.
\end{proposition}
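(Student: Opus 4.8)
The plan is to work directly with probability density functions and reduce the $(\epsilon,0)$-DP condition to a pointwise bound on the ratio of densities for neighboring datasets. First I would write down the density of $\M(\cS)$: since $Z\sim Lap(0,1)$ has density $\tfrac12 e^{-|z|}$, a change of variables shows that $\M(\cS)=T(\cS)+\tfrac{\GS(T)}{\epsilon}Z$ has density
\[
  p_{\cS}(y) \;=\; \frac{\epsilon}{2\,\GS(T)}\,\exp\!\left(-\frac{\epsilon\,|y-T(\cS)|}{\GS(T)}\right)
\]
on $\reals$ (we treat the output as real-valued here, the noise being unbounded).

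Next, for any two neighbors $\cS,\cS'$ and any point $y$, I would bound the likelihood ratio
\[
  \frac{p_{\cS}(y)}{p_{\cS'}(y)} \;=\; \exp\!\left(\frac{\epsilon}{\GS(T)}\bigl(|y-T(\cS')|-|y-T(\cS)|\bigr)\right).
\]
By the reverse triangle inequality, $|y-T(\cS')|-|y-T(\cS)|\le |T(\cS)-T(\cS')|$, and since $\cS'\in\N(\cS)$ the definitions of local and global sensitivity give $|T(\cS)-T(\cS')|\le \LS(T,\cS)\le \GS(T)$. Hence $p_{\cS}(y)\le e^{\epsilon}\,p_{\cS'}(y)$ for every $y$.

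Finally I would integrate this pointwise inequality: for any measurable $A$,
\[
  \Pr(\M(\cS)\in A) \;=\; \int_A p_{\cS}(y)\,dy \;\le\; e^{\epsilon}\int_A p_{\cS'}(y)\,dy \;=\; e^{\epsilon}\,\Pr(\M(\cS')\in A),
\]
which is exactly the $(\epsilon,0)$-DP condition with $\delta=0$. The density bookkeeping and the final integration step are routine; the one place the argument actually happens is the reverse-triangle-inequality bound together with the global-sensitivity definition, so that is where I would be most careful — in particular, making sure the difference $|y-T(\cS')|-|y-T(\cS)|$ is controlled uniformly over all $y$ by $\GS(T)$ (and not merely by the dataset-dependent local sensitivity), which is precisely why global sensitivity, rather than local sensitivity, is the right quantity to scale the noise by.
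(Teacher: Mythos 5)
Your proof is correct: the density-ratio argument (compute the Laplace density of $\M(\cS)$, bound the likelihood ratio via the reverse triangle inequality and $|T(\cS)-T(\cS')|\le \GS(T)$, then integrate over $A$) is exactly the canonical proof of the Laplace mechanism. The paper states \propref{laplace-mech} without proof, treating it as a standard known result, so there is no divergence to discuss; your handling of the real-valued output (unbounded noise) is also fine, since restricting or clipping to $\output$ afterwards preserves the guarantee by post-processing.
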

While the Laplace mechanism is simple, the amount of noise added can be more
than is strictly necessary.  This is because $\GS(T)$ is a max over
all possible datasets.  It might seem like 
an easy fix is to just add noise proportional to $\LS(\cS)$.  However, this is
not DP, because $\LS(\cS)$ is itself a sensitive quantity; see Section 2.1
of~\cite{nissim2007stoc} for a detailed example.  To bridge the gap between
$\LS$ and $\GS$, \cite{nissim2007stoc} introduced the notion of \emph{smooth
sensitivity}, which depends on the distance between datasets.
\begin{definition}
\deflabel{dataset-distance}
Datasets $\cS, \cS'$ are at \textbf{distance} $k$, denoted $d(\cS, \cS') = k$,
if the existence of a sequence $\cS = \cS_0, \ldots,
\cS_m = \cS'$ such that $\cS_i \in \N(\cS_{i-1})$ implies $m \geq k$.
\end{definition}
For completeness, the appendix contains a proof that $d$ is a true
distance function (a metric) over $\bS$.
\begin{definition}
\deflabel{smooth-sens}
  Given a function $T \colon \bS \rightarrow \output$, $\beta >0$ and a dataset
$\cS \in \bS$, the \textbf{$\beta$-smooth sensitivity} of $T$ at $\cS$ is:
$\SS_\beta(T, \cS) = \max_{k \geq 0} \max_{\cS' \colon d(\cS, \cS')=k} e^{-\beta
k}\LS(T, \cS')$.
\end{definition}
The smooth sensitivity is a low-sensitivity estimate of the the local
sensitivity. If elements of $\N(\cS)$ have similar local sensitivity, then
$\SS_\beta \approx \LS$. On the other hand, if there are datasets near $\cS$
with local sensitivity close to $\GS$ then $\SS_\beta
\approx \GS$. Formally (see~\cite{nissim2007stoc}), for all $\beta > 0$:
\begin{equation}
\eqlabel{sens-relations}
  \LS(T, \cS) \leq \SS_\beta(T, \cS) \leq \GS(T).
\end{equation}
One crucial property of $\SS_\beta$ is that the Laplace mechanism remains DP if
$\GS$ is replaced by $\SS_\beta$.
\begin{proposition}\cite[Theorem 34]{bun2019nips}
  \proplabel{smoothsens-ed} Fix some $\epsilon, \delta > 0$. Let $\alpha >0$ and
  $\beta >0$ satisfy $\epsilon \geq \alpha + (e^\beta -1) \log(1/\delta)
  - \beta$. Then the mechanism that returns $T(\cS)
  + \frac{\SS_\beta(T, \cS)}{\alpha} Z$, where $Z \sim Lap(0, 1)$, is
  $(\epsilon, \delta)$-DP.
\end{proposition}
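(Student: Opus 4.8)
The plan is to adapt the ``admissible noise'' analysis of \cite{nissim2007stoc}, but with enough bookkeeping to recover the exact relation among $\epsilon,\delta,\alpha,\beta$. Fix neighbors $\cS$ and $\cS'\in\N(\cS)$, write $S=\SS_\beta(T,\cS)$ and $S'=\SS_\beta(T,\cS')$, and let $p$ and $q$ be the densities of $\M(\cS)=T(\cS)+\tfrac{S}{\alpha}Z$ and $\M(\cS')=T(\cS')+\tfrac{S'}{\alpha}Z$, so that $p(x)=\tfrac{\alpha}{2S}e^{-\alpha\abs{x-T(\cS)}/S}$ and likewise for $q$. First I would reduce $(\epsilon,\delta)$-DP to a tail bound on the privacy loss: letting $B=\{x:p(x)>e^{\epsilon}q(x)\}$ and splitting an arbitrary event $A$ into $A\cap B$ and $A\setminus B$ (using $p\le e^{\epsilon}q$ off $B$), one gets $\Pr(\M(\cS)\in A)\le\Pr(\M(\cS)\in B)+e^{\epsilon}\Pr(\M(\cS')\in A)$, so it suffices to show $\Pr_{Z}[\ln(p(x)/q(x))>\epsilon]\le\delta$ with $x=T(\cS)+\tfrac{S}{\alpha}Z$, together with the symmetric statement obtained by swapping $\cS$ and $\cS'$.

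Next I would turn the privacy loss into an affine function of $\abs{Z}$ using two facts that follow directly from \defref{smooth-sens} and \eqref{sens-relations}: (i) $\abs{T(\cS)-T(\cS')}\le\LS(T,\cS)\le S$ and symmetrically $\le S'$, hence $\abs{T(\cS)-T(\cS')}\le\min(S,S')$; and (ii) because passing to a neighbor shifts the function $d(\cS,\cdot)$ by at most one, the max defining $\SS_\beta$ can change by a factor at most $e^{\beta}$, so $e^{-\beta}\le S/S'\le e^{\beta}$. Substituting $x=T(\cS)+\tfrac{S}{\alpha}Z$, using $\abs{x-T(\cS)}=\tfrac{S}{\alpha}\abs{Z}$ together with the triangle inequality $\abs{x-T(\cS')}\le\abs{x-T(\cS)}+\abs{T(\cS)-T(\cS')}$, and writing $\rho=S/S'$, a short computation gives
\begin{equation*}
  \ln\frac{p(x)}{q(x)}\ \le\ -\ln\rho\ +\ \alpha\min(\rho,1)\ +\ (\rho-1)\abs{Z}.
\end{equation*}

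The remaining step is the tail bound, and the case that actually requires it is $\rho\ge 1$ (i.e. $S\ge S'$). There $\min(\rho,1)=1$, so $\ln(p(x)/q(x))>\epsilon$ forces $\abs{Z}>(\epsilon-\alpha+\ln\rho)/(\rho-1)$; since $\abs{Z}\sim\mathrm{Exp}(1)$ and $\rho\mapsto(\epsilon-\alpha+\ln\rho)/(\rho-1)$ is decreasing on $(1,e^{\beta}]$, this event has probability at most $\exp\!\big(-(\epsilon-\alpha+\beta)/(e^{\beta}-1)\big)$, which is $\le\delta$ precisely under the hypothesis $\epsilon\ge\alpha+(e^{\beta}-1)\log(1/\delta)-\beta$. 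In the complementary case $\rho\le 1$ the coefficient $\rho-1$ is nonpositive, so the bound above is maximized at $Z=0$ and the privacy loss is controlled deterministically by $-\ln\rho+\alpha\rho\le\alpha+\beta$ with no tail argument; the swapped pair $(\cS',\cS)$ is handled identically with $\rho$ replaced by $1/\rho$, and for $\delta$ bounded away from $1$ the requirements from these cases are no stronger than the displayed one.

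I expect the main obstacle to be exactly this constant-tracking. The multiplicative term $\ln(S'/S)$ must be carried along with the slope $(\rho-1)\abs{Z}$ rather than bounded separately, because it is their joint behavior at the extreme ratio $\rho=e^{\beta}$ that produces the $-\beta$ in the stated bound: bounding $\ln(S'/S)\le\beta$ in isolation would only yield the weaker hypothesis $\epsilon\ge\alpha+(e^{\beta}-1)\log(1/\delta)$. A secondary point to get right is that both neighbor orderings must be checked, since ``neighbor'' is symmetric but the two noise scales $S$ and $S'$ are not.
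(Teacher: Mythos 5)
Since the paper does not prove this proposition at all---it is imported verbatim as Theorem 34 of \cite{bun2019nips}---there is no in-paper argument to compare against, and your proof has to stand on its own. Its skeleton is sound: the reduction of $(\epsilon,\delta)$-DP to the privacy-loss tail bound for both orderings is standard, facts (i) and (ii) are correct consequences of \defref{smooth-sens} and \eqref{sens-relations}, the inequality $\ln(p/q)\le -\ln\rho+\alpha\min(\rho,1)+(\rho-1)\abs{Z}$ with $\rho=\SS_\beta(T,\cS)/\SS_\beta(T,\cS')\in[e^{-\beta},e^{\beta}]$ is right, and in the case $\rho\ge 1$ evaluating the $\mathrm{Exp}(1)$ tail at the worst ratio $\rho=e^{\beta}$ recovers the stated hypothesis $\epsilon\ge\alpha+(e^{\beta}-1)\log(1/\delta)-\beta$ exactly, which is strong evidence you have the right mechanism behind the constant.

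The genuine gap is in how you dismiss the complementary case, and in the corner of the parameter space it exposes. You assert that for $\rho\le 1$ the loss is ``controlled deterministically by $-\ln\rho+\alpha\rho\le\alpha+\beta$,'' but $\alpha+\beta\le\epsilon$ is \emph{not} implied by the stated hypothesis: it needs $(e^{\beta}-1)\log(1/\delta)\ge 2\beta$, i.e.\ roughly $\delta\le e^{-2\beta/(e^{\beta}-1)}$, which can fail already for $\delta\approx 0.15$--$0.3$ with small $\beta$---values certainly ``bounded away from $1$,'' so your closing caveat mischaracterizes when the argument is safe. Worse, in that regime the problem is not just your crude bound: with $\rho=e^{-\beta}$ and shift $\min(S,S')$ the loss near $Z=0$ is about $\beta+\alpha e^{-\beta}$ and decays with slope $1-e^{-\beta}$ in $\abs{Z}$, so for admissible parameters with small $\alpha$ the event $\{\ln(p/q)>\epsilon\}$ can carry mass exceeding $\delta$; then the sufficient condition $\Pr[\mathrm{loss}>\epsilon]\le\delta$ that your whole reduction rests on is itself too lossy, and proving the proposition at that generality would require bounding $\int\max(0,\,p-e^{\epsilon}q)$ directly rather than the loss-tail probability. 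A related small hole: your monotonicity claim for $(\epsilon-\alpha+\ln\rho)/(\rho-1)$ on $(1,e^{\beta}]$ needs $\alpha\le\epsilon$, which the hypothesis only guarantees when $\delta\le e^{-1}$. In summary: for $\delta\le e^{-2}$ (every practically relevant setting) your argument is complete and nails the constant; for the proposition as literally stated (arbitrary $\delta>0$) the final paragraph is where it is incomplete, and the fix is either to add the restriction $\alpha+\beta\le\epsilon$ (or small $\delta$) or to sharpen the reduction as above.
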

It is also possible to achieve $(\epsilon, 0)$-DP with a smooth sensitivity
mechanism.  However, the noise that must be added in this case has polynomial
rather than exponential tail decay.
\begin{proposition}[Combining Lemmas 2.6 and 2.7 from \cite{nissim2007stoc}]
\proplabel{smoothsens-e}
Fix some $\epsilon > 0$ and some $\gamma > 1$.  Define the density
$h(z) \propto \frac{1}{1 + |z|^{\gamma}}$.  Let $\alpha = \beta
= \frac{\epsilon}{2(\gamma + 1)}$.  Then the mechanism that returns $T(\cS)
+ \frac{\SS_\beta(T, \cS)}{\alpha} Z$, where $Z \sim h(z)$, is $(\epsilon,
0)$-DP.
\end{proposition}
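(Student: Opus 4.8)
The plan is to recognize the claim as the composition of the two cited lemmas from~\cite{nissim2007stoc}: Lemma~2.6 is a general reduction saying that an \emph{admissible} noise density, scaled by $\SS_\beta(T,\cS)/\alpha$, yields $(\epsilon,0)$-DP, while Lemma~2.7 checks that the specific density $h(z)\propto(1+|z|^{\gamma})^{-1}$ is admissible with the stated $\alpha,\beta$. Concretely, I would call a symmetric density $h$ on $\reals$ \emph{$(\alpha,\beta)$-admissible for $\epsilon$} if (i) [sliding] $h(z)\le e^{\epsilon/2}h(z+\Delta)$ for all $z$ and all $|\Delta|\le\alpha$, and (ii) [dilation] $h(z)\le e^{\epsilon/2}e^{\lambda}h(e^{\lambda}z)$ for all $z$ and all $|\lambda|\le\beta$ (the $e^{\lambda}$ being the one-dimensional Jacobian of $y\mapsto e^{\lambda}y$). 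So the proof has two halves: re-deriving the reduction in our notation, and verifying the two pointwise inequalities for $h$. Note $h$ is a bona fide density exactly because $\gamma>1$ makes $(1+|z|^{\gamma})^{-1}$ integrable, and its normalizing constant cancels in every ratio below.

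\emph{The reduction.} For neighbors $\cS,\cS'$ I would put $\sigma=\SS_\beta(T,\cS)/\alpha$, $\sigma'=\SS_\beta(T,\cS')/\alpha$, $t=T(\cS)$, $t'=T(\cS')$, so the output densities are $f_{\cS}(x)=\sigma^{-1}h((x-t)/\sigma)$ and likewise with primes. Substituting $u=(x-t')/\sigma'$ turns $(x-t)/\sigma$ into $e^{\lambda}u+\Delta$, where $\lambda=\ln(\sigma'/\sigma)$ and $\Delta=(t'-t)/\sigma$. Two observations make this useful: $|\Delta|\le\alpha$, since $|t'-t|\le\LS(T,\cS)\le\SS_\beta(T,\cS)=\alpha\sigma$ by \eqref{sens-relations}; and $|\lambda|\le\beta$, since $\SS_\beta$ changes by at most a factor $e^{\beta}$ between neighbors (a two-line argument using the triangle inequality for $d$ inside the max in \defref{smooth-sens}). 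Applying (i) to absorb the ``$+\Delta$'' and then (ii) with parameter $-\lambda$ to absorb the ``$\times e^{\lambda}$'', the leftover scale factors collapse to $\sigma'^{-1}$, giving $f_{\cS}(x)\le e^{\epsilon/2}\cdot e^{\epsilon/2}\sigma'^{-1}h(u)=e^{\epsilon}f_{\cS'}(x)$, which is $(\epsilon,0)$-DP.

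\emph{Verifying admissibility of $h$.} For (i): if $|z+\Delta|\le|z|$ the ratio $h(z)/h(z+\Delta)$ is $\le1$; otherwise $|z+\Delta|\le|z|+\alpha$ and, writing $a=|z|$, it suffices to bound $\ln\frac{1+(a+\alpha)^{\gamma}}{1+a^{\gamma}}=\int_{a}^{a+\alpha}\frac{\gamma s^{\gamma-1}}{1+s^{\gamma}}\,ds\le\gamma\alpha$, the integrand being $\le\gamma$ because $s^{\gamma-1}\le1+s^{\gamma}$ for all $s\ge0$, $\gamma\ge1$. Since $\gamma\alpha=\frac{\gamma}{\gamma+1}\cdot\frac{\epsilon}{2}\le\frac{\epsilon}{2}$, property (i) holds. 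For (ii): $\frac{h(z)}{e^{\lambda}h(e^{\lambda}z)}=\frac{1+e^{\gamma\lambda}|z|^{\gamma}}{e^{\lambda}(1+|z|^{\gamma})}$; for $\lambda\ge0$ bound the numerator by $e^{\gamma\lambda}(1+|z|^{\gamma})$ to get $\le e^{(\gamma-1)\lambda}\le e^{(\gamma-1)\beta}$, and for $\lambda<0$ bound $e^{\gamma\lambda}|z|^{\gamma}\le|z|^{\gamma}$ to get $\le e^{-\lambda}\le e^{\beta}$; both are $\le e^{\epsilon/2}$ because $(\gamma-1)\beta<\frac{\epsilon}{2}$ and $\beta<\frac{\epsilon}{2}$.

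I expect the main obstacle to be bookkeeping rather than any single hard inequality: getting the simultaneous shift-and-dilation decomposition in the reduction exactly right (pairing ``$+\Delta$'' with sliding, ``$\times e^{\lambda}$'' with dilation, and tracking the Jacobian $e^{\lambda}$ so the dilation condition comes out in precisely the form above), and confirming that the two $e^{\epsilon/2}$ factors really compose to $e^{\epsilon}$. On the analytic side, the only step where a naive argument is unsatisfying is the sliding bound---a direct algebraic estimate such as $1+(a+\alpha)^{\gamma}\le(1+\alpha)^{\gamma}(1+a^{\gamma})$ is awkward to establish---so recasting it as the integral $\int_{a}^{a+\alpha}\frac{\gamma s^{\gamma-1}}{1+s^{\gamma}}\,ds$ is the move worth getting right, and it is also where the constant $\gamma+1$ (rather than $\gamma$) in $\alpha,\beta$ leaves a comfortable margin.
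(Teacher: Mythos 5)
Your proposal is correct: the paper itself gives no proof of this proposition beyond citing Lemmas 2.6 and 2.7 of \citet{nissim2007stoc}, and what you have written is a faithful, self-contained reconstruction of exactly that argument (the admissibility reduction via sliding and dilation, plus the pointwise verification that $h(z)\propto(1+|z|^{\gamma})^{-1}$ is admissible with $\alpha=\beta=\frac{\epsilon}{2(\gamma+1)}$). The bookkeeping in your reduction, the bounds $|\Delta|\le\alpha$ and $|\lambda|\le\beta$, and both admissibility inequalities all check out, so this matches the proof the citation points to.
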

Smooth sensitivity is particularly useful for estimating medians or more
generally, quantiles. Indeed, it is not hard to see that the global sensitivity
of the median, for a dataset where elements are in $\output$, is simply
$b-a$. Adding noise of that scale would overwhelm any signal in the output. On
the other hand, for a \emph{usual} dataset, changing a few elements won't change
the median much. Therefore, we would expect smooth sensitivity to be much
smaller than global sensitivity.

We now turn our attention to another popular DP mechanism and the main focus of
this paper: the exponential mechanism (EM). This mechanism defines a
distribution using a \emph{utility function}.
\begin{definition}\cite[Definition 2]{mcsherry2007focs}
  Given a utility function $u: \output \times \bS \to \reals$, the \textbf{exponential mechanism} outputs $x \in \output$
with probability proportional to $\exp\left(\frac{\epsilon u(x, \cS )}{2
  \sensu}\right)$, where $\sensu$ is the sensitivity of $u$:
 $\sensu =  \max_{\cS \in \bS} \max_{\cS' \in \N(\cS)} \max_{x \in \output} |u(x, \cS) - u(x, \cS')|$.
\end{definition}
The EM is always $(\epsilon,0)$-DP. Moreover, it is known to have the
following utility guarantee, whose proof is included in the appendix for
completeness.
\begin{restatable}[\cite{mcsherry2007focs}]{proposition}{expmechprop}
  \proplabel{exponentialmechanismutility}
  Let  $\cS$ be a dataset and $u \colon [a,b] \times \bS \rightarrow \reals$ be a utility function with sensitivity $\sensu$. Let $X $ be a random variable sampled according to the EM and let $\lambda$ denote the Lebesgue measure. If $\textrm{OPT} = \max_x u(x, \cS)$ and $H_t = \{x \colon u(x, \cS) > \textrm{OPT} - t\}$ then:
  \begin{equation*}
    P(u(X, \cS) < \textrm{OPT} - t)\leq \frac{(b-a)}{\lambda(H_{t/2})} e^{-\frac{\epsilon t}{4 \sensu}}.
  \end{equation*}
\end{restatable}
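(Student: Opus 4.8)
The plan is to read off the probability of the ``bad'' event $\{u(X,\cS) < \textrm{OPT} - t\}$ directly from the definition of the exponential mechanism, writing it as a ratio of two integrals — the unnormalized mass EM puts on low-utility outputs over the normalizing constant — and then bounding the numerator from above and the denominator from below.

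First I would record that, for any (measurable) $A \subseteq \output$,
\[
P(X \in A) = \frac{\int_A \exp\!\left(\frac{\epsilon\, u(x,\cS)}{2\sensu}\right) dx}{\int_{\output} \exp\!\left(\frac{\epsilon\, u(y,\cS)}{2\sensu}\right) dy},
\]
and apply it with $A = \{x \in \output : u(x,\cS) \le \textrm{OPT} - t\}$. On that set the integrand is at most $\exp\!\left(\frac{\epsilon(\textrm{OPT}-t)}{2\sensu}\right)$ (since $u(\cdot,\cS)$ is no larger than $\textrm{OPT}-t$ and $\exp$ is increasing), and $A$ has Lebesgue measure at most $b-a$, so the numerator is at most $(b-a)\exp\!\left(\frac{\epsilon(\textrm{OPT}-t)}{2\sensu}\right)$.

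Next I would lower-bound the denominator by discarding everything outside $H_{t/2}$: since every $y \in H_{t/2}$ satisfies $u(y,\cS) > \textrm{OPT} - t/2$, the integrand there exceeds $\exp\!\left(\frac{\epsilon(\textrm{OPT}-t/2)}{2\sensu}\right)$, and hence the denominator is at least $\lambda(H_{t/2})\exp\!\left(\frac{\epsilon(\textrm{OPT}-t/2)}{2\sensu}\right)$. Dividing the two estimates, the common factor $\exp\!\left(\frac{\epsilon\,\textrm{OPT}}{2\sensu}\right)$ cancels and the leftover exponent is $\frac{\epsilon}{2\sensu}\left(-t + \tfrac{t}{2}\right) = -\frac{\epsilon t}{4\sensu}$, which is exactly the claimed bound $\frac{b-a}{\lambda(H_{t/2})}\,e^{-\epsilon t/(4\sensu)}$.

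I do not expect a genuine obstacle here — this is the classical McSherry--Talwar argument — only a few points of routine care: if $\lambda(H_{t/2}) = 0$ the right-hand side is $+\infty$ and the statement is vacuous, so one may assume it is positive; one should also note $\textrm{OPT}$ is attained (or else work with a supremum and pass to a limit) so that the sets $A$ and $H_{t/2}$ are well defined, and assume $u(\cdot,\cS)$ measurable so the integrals and the event make sense. All of these are implicit in the statement, so the proof is essentially the two-sided bound on the EM density described above.
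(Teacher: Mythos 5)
Your proof is correct and is essentially the paper's argument: both bound the unnormalized mass on $\{u(\cdot,\cS)\le \textrm{OPT}-t\}$ above by $(b-a)e^{\epsilon(\textrm{OPT}-t)/(2\sensu)}$ and the normalizing mass below by $\lambda(H_{t/2})e^{\epsilon(\textrm{OPT}-t/2)/(2\sensu)}$ (the paper phrases the latter as dividing by $P(H_{t/2})\le 1$, which is the same step). No substantive difference.
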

In words: the probability of an output with utility $t$ or more below
$\textrm{OPT}$ is exponentially small in $t$. While the above guarantee is
useful as a general-purpose bound, it is often hard to
interpret. In \secref{ouru} we present the first analysis expressing the utility
guarantees of the exponential mechanism in terms of the more intuitive notion of
smooth sensitivity. This connection:
\begin{itemize}[topsep=0pt,noitemsep,leftmargin=*]
  \item helps us theoretically compare two seemingly unrelated mechanisms, EM
and the smooth-sensitivity-based mechanism of \propref{smoothsens-ed}, and
  \item shows that it is possible to define an $(\epsilon, 0)$-DP mechanism with
faster-than-polynomial noise decay whose variance is also as small as the smooth
sensitivity.
\end{itemize}
This might seem like a contradiction to~\cite{nissim2007stoc}, which states
that \emph{admissible} distributions {\cite[Definition 2.4]{nissim2007stoc}
require polynomial tails.  However, their definition of ``admissible'' does not
encompass all possible DP mechanisms, so it does not actually imply the
non-existence of a DP mechanism with faster-than-polynomial decay.

\section{Related work}
\seclabel{related}

\citet{nissim2007stoc} introduced the notion of smooth sensitivity and
applied it to reduce the noise required to estimate statistics such as
medians. They proposed $(\epsilon, 0)$-DP mechanisms with polynomially decaying
tails and $(\epsilon,\delta)$-DP mechanisms with exponentially decaying
tails. These mechanisms were tightened by~\cite{bun2019nips}, who introduced
light-tailed distributions with variance proportional to the smooth
sensitivity. The privacy guarantees of~\cite{bun2019nips}, however, are under
the relaxed model of \emph{concentrated} DP. Thus, to the best of our knowledge
it remains an open question whether there is a truly $(\epsilon, 0)$-DP
mechanism with variance proportional to smooth sensitivity that also has
faster-than-polynomial tail decay.  In this work we answer the question
affirmatively with \ouru.

In addition to smooth sensitivity, another common approach to tailoring noise to
a dataset is the propose-test-release method~\cite{dwork2009stoc}. This
method tests (in a private way) whether the sensitivity of a function on a
particular dataset is below a given value $\tau$.  If the test passes, then the
mechanism outputs the true function value perturbed by noise with variance
proportional to $\tau$. If the test fails, then \textrm{null} (no output) is
returned. Because there is always a probability of having a false positive in
the test, this mechanism can never achieve $(\epsilon, 0)$-DP. Moreover, the
performance of this mechanism depends heavily on the type of test used on the
first step. Since this dependence is not easily parametrized, designing the best
test for each function is a non-trivial task.

Finally, our work builds upon the well-studied exponential mechanism (EM)
of~\cite{mcsherry2007focs}. Traditionally, the EM has been used for output
spaces where generalizations of the Laplace distribution are not readily
available. For instance, the EM has been exploited when the output space
consists of: databases~\cite{blum2008stoc}, eigenvectors~\cite{eigenvectors},
and infinite-dimensional vectors~\cite{hilbertdp}. In contrast, we propose
applying the EM to outputs, such as medians, which have commonly been made
private using a mechanism involving a Laplace distribution.  To do so, we
introduce a dataset-distance-based utility function family, or \ouru, which
allows us to a) easily control the EM's sensitivity, and b) provide a
data-dependent analysis of the EM. Before this work, the utility guarantees for
the EM were either dataset-independent~\cite{mcsherry2007focs}, specific to a
particular task~\cite{eigenvectors, hilbertdp}, or conditioned on the data being
drawn from a smooth distribution~\cite{wasserman}.

\section{\ouru: dataset-distance-based utility function family}
\seclabel{ouru}
As described in the previous sections, the EM is a popular tool for releasing
private information. However, the particular choice of utility function is
crucial to its performance.  For example, consider the task of estimating the
median of a dataset $\cS = \{y_1, \ldots, y_n\}$, with $y_i \in \output$.  We
will assume the median index $m$ to be $(n+1)/2$ if $n$ is odd and $n/2$ if $n$
is even.  One could use either of the following utility functions:
\begin{equation*}
  u_1(x, \cS) = -|x - T(\cS)| \eqlabel{medianl1}\qquad \textrm{or}\qquad u_2(x, \cS) = -\left|\sum_{y \in \cS} \ind(y < x) - \sum_{y \in \cS} \ind(y > x)\right|.
\end{equation*}
Both functions achieve their maximum value at the true median, $x =
T(\cS)$. However, if we consider their ranges and sensitivities, it becomes
clear that $u_2$ will provide much better median estimates:
\begin{itemize}[topsep=0pt,noitemsep,leftmargin=*]
  \item $u_1$: Range is $[-(b - a), 0]$.  Sensitivity is $\Delta_{u_1} = b-a$;
in the extreme case where there is only a single point in the dataset, $\cS
= \{b\}$, the median can shift from the max value $b$ all the way to the min
value $a$ with the addition of a single new point: $\cS' = \{a, b\}$.

  \item $u_2$: Range is $[-n, 0]$.  Sensitivity is $\Delta_{u_2} = 1$; if a
  point is added to (or removed from) $\cS$, then an indicator function is added
  to (or removed from) each summation in $u_2$, and only one of these two
  indicators will ever be active for a given $x$.
\end{itemize}
The range of $u_1$ is equal to its sensitivity.  In contrast, $u_2$'s range is
much larger than its sensitivity.  This means that the corresponding EM density
is much smaller at quantiles far from the median.  From this example, it is
clear that a good utility function is one that assigns low value to points far
away from the true statistic value, $T(\cS)$, yet has very small sensitivity. We
will now define a dataset-distance-based utility function family (\ouru) that
can achieve this for many real datasets.
\begin{definition}
\deflabel{our-utility} For dataset $\cS$ and statistic $T$, \ouru is
  $\uadmis(x, \cS) = -\min_{\cS' \colon T(\cS') = x} d(\cS,  \cS')$.
\end{definition}
Intuition: This utility function considers all datasets that have statistic
value $x$, and finds one that is closest to the actual dataset $\cS$.  The $d$
value is then a measure of how difficult it is to go from $T(\cS)$ to $x$, and
so its negation is the utility of output $x$ for the dataset $\cS$. One of the important features of $\ouru$ is that the sensitivity of this function is always $1$. (See \lemref{ourusens} in the appendix.)

\subsection{Instantiations}
\seclabel{first-insts}

We now instantiate \ouru for a few statistics to illustrate that it often
reduces to a simple step function.

\paragraph{Modes.} Consider a dataset $\cS$ with values from a
finite set $\Y$. For every $y \in \cS$, let $n_y$ denote the frequency of that
element in $\cS$, and let $n_{\max} = \max_{y \in \cS} n_y$ denote the frequency
of the mode.  Then it is not hard to see that \ouru reduces to the following
simple function: $\uadmis(x, \cS) = n_x - n_{\max}$.

\paragraph{Medians.} Let $\cS= \{y_1, \ldots, y_n\}$ with each
$y_i \in [a, b]$, and assume $y_i < y_{i+1}$.  We
will again consider the median index $m$ to be $(n+1)/2$ if $n$ is odd and $n/2$ if
$n$ is even.  For convenience, we also define $y_0 = a$ and $y_{n+1} = b$.  Then
we can show that $\uadmis$ reduces to the step function given below.
\begin{itemize}[topsep=0pt,noitemsep,leftmargin=*]
\item For $1 \leq k \leq m$ and $y_{m-k} \leq x \leq y_{m-k+1}$, we have $\uadmis(x, \cS) = -2k + \ind(n\;\textrm{odd})$.
\item For $1 \leq k \leq m + \ind(k\;\textrm{even})$ and $y_{m+k-1} \leq x \leq y_{m+k}$, we have $\uadmis(x, \cS) = -2k + \ind(n\;\textrm{even})$.
\end{itemize}

Intuition: Consider the case where $n$ is odd.  By adding an element $y$ between
$y_{m-1}$ and $y_m$, one can obtain a dataset $\cS'$ such that $d(\cS, \cS') = 1$
and $T(\cS') = y$.  The same can be achieved for datasets where $n$ is even by
adding an element $y$ between $y_m$ and $y_{m+1}$.  To move further away from
the original median requires at least two changes to the dataset.  For example,
when $n$ is odd, we can add a new element $y$ between $y_m$ and $y_{m+1}$, then
remove $y_m$, to obtain a dataset $\cS'$ such that $d(\cS, \cS') = 2$ and
$T(\cS') = y$.  A simple inductive argument repeatedly applying the two types of
changes just described yields the utility function form given above.

\paragraph{Means.}
Let $\cS= \{y_1, \ldots, y_n\}$ with each $y_i \in [a, b]$.  Our target
statistic is $\mu \coloneqq \frac{1}{n} \sum_{y \in \cS} y$.  Consider how this statistic changes when points are added to or removed from $\cS$:
\begin{itemize}[topsep=0pt,noitemsep, leftmargin=*]
  \item Adding a single new element $y$: This changes the mean to a value of
  $\mu'(y) \coloneqq \mu\frac{n}{n + 1} + \frac{y}{n + 1}$.  Thus, any mean
  $x$ in the range $\mu'(a) \leq x \leq \mu'(b)$ has utility
  $\uadmis(x, \cS) = -1$ (except for the true mean value, which has $\uadmis(\mu, \cS) =
  0$).

  \item Removing a single element $y_i$: This changes the mean to a value of
  $\mu\frac{n}{n - 1} - \frac{y_i}{n - 1}$.  But since the value of
  each $y_i$ is fixed, the measure of the output space that we can reach with
  removals alone is zero.  Thus, removals are really only meaningful when
  combined with additions.
\end{itemize}
To efficiently determine the minimum combination of additions and removals that
covers the remaining regions of the output space $[a, b]$ requires a dynamic
program.  To see that the combinations that it must consider are bounded, notice
that any mean value is achievable with $n + 1$ changes to the original
data---remove all $n$ original data points and add a point at value $x$ to get
that value as the mean.  Thus, ultimately we end up with a step function that
has no more than $n + 1$ levels.

\subsection{Utility bounds}

We now analyze how similar the output of the EM based on our utility function
$\uadmis$ will be to the true value of a target statistic $T$. Our guarantees
will be stated in terms of the function
$\beta_{T, \cS}^* \colon \reals_+ \rightarrow \reals_+$, defined as the inverse
of the function $\beta \mapsto \frac{\SS_\beta(T, \cS)}{\beta}$. The inverse
$\beta^*_{T, \cS}$ is well-defined since $\frac{SS_\beta(T, \cS)}{\beta}$ is
continuous and is the product of a decreasing and a strictly decreasing
function, which makes it strictly decreasing. When clear from context, we will
remove the dependency on $T$ and $\cS$ from the function $\beta^*$. Proofs of
all theorems, corollaries, etc. can be found in the appendix.
\begin{restatable}{theorem}{thmutility}
  \thmlabel{utility} Let $x \in \output$ denote the output of the
EM with utility function $\uadmis$.  Let $\lambda$
denote the Lebesgue measure and $\gamma =
\frac{1}{2\beta^*(t/(e-1))}$. If $H_t = \{x \mid \uadmis(x, \cS) \geq
-t\}$, then:
  \begin{align}
    P(|x - T(\cS)| > t) \leq \frac{2  \exp\left(-\frac{\epsilon\gamma}{2} \right)(b-a)}{\lambda\left(H_{\gamma}\right)}. \label{eq:error_bound}
    \end{align}
\end{restatable}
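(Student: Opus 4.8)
The plan is to reduce \thmref{utility} to the generic exponential‑mechanism guarantee of \propref{exponentialmechanismutility} by proving a ``geometric'' estimate that converts a bound on the \ouru value of an output $x$ into a bound on $|x - T(\cS)|$, with smooth sensitivity serving as the bridge between the two. Two elementary facts about $\uadmis$ make the reduction clean. First, since $d(\cS,\cS)=0$ and $d\ge 0$, we have $\uadmis(T(\cS),\cS)=0$ and hence $\textrm{OPT}=\max_x\uadmis(x,\cS)=0$. Second, by \lemref{ourusens} the sensitivity is $\sensu=1$. So, specialized to $u=\uadmis$, \propref{exponentialmechanismutility} reads $P(\uadmis(X,\cS) < -s)\le \frac{b-a}{\lambda(H_{s/2})}e^{-\epsilon s/4}$ for every $s>0$, with $H$ essentially the sub‑level set appearing in the theorem.

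The heart of the argument is the geometric estimate. Fix $\beta>0$ and suppose $\uadmis(x,\cS)\ge -k$; then there is a chain $\cS=\cS_0,\cS_1,\dots,\cS_{k'}$ with $\cS_i\in\N(\cS_{i-1})$, $k'\le k$, and $T(\cS_{k'})=x$. Telescoping and using that consecutive datasets are neighbors,
\[
 |x - T(\cS)| \le \sum_{i=1}^{k'} |T(\cS_i)-T(\cS_{i-1})| \le \sum_{i=1}^{k'} \LS(T,\cS_{i-1}).
\]
Since $d(\cS,\cS_{i-1})\le i-1$, the definition of $\SS_\beta$ gives $\LS(T,\cS_{i-1})\le e^{\beta(i-1)}\SS_\beta(T,\cS)$, so summing the geometric series and using $e^\beta-1\ge\beta$ yields
\[
 |x - T(\cS)| \le \SS_\beta(T,\cS)\,\frac{e^{\beta k}-1}{e^\beta-1} \le \frac{\SS_\beta(T,\cS)}{\beta}\bigl(e^{\beta k}-1\bigr),
\]
which holds for every $\beta>0$, so $\beta$ is free to tune.

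Now choose $\beta=\beta^*(t/(e-1))$, so that by definition of $\beta^*$ we have $\SS_\beta(T,\cS)/\beta = t/(e-1)$, and set $\gamma=1/(2\beta)$ exactly as in the theorem. Taking $k=2\gamma=1/\beta$ in the displayed bound shows that every $x\in H_{2\gamma}$ satisfies $|x-T(\cS)|\le \frac{t}{e-1}(e^{1}-1)=t$. Contrapositively, $|x-T(\cS)|>t$ forces $\uadmis(x,\cS)<-2\gamma$, hence $P(|X-T(\cS)|>t)\le P(\uadmis(X,\cS)<-2\gamma)$. Applying \propref{exponentialmechanismutility} with $s=2\gamma$ (and $\sensu=1$, $\textrm{OPT}=0$) turns the exponent into $e^{-\epsilon\gamma/2}$ and the normalization into $\lambda(H_\gamma)$, giving \eqref{error_bound}; the constant $2$ is slack absorbed when passing from the open sub‑level set of \propref{exponentialmechanismutility} to the closed sub‑level set $H_\gamma$ of the theorem (equivalently, when rounding $2\gamma$ up to an integer number of dataset modifications, which is harmless since $\uadmis$ is integer‑valued on the statistics considered and the rounding only shrinks the exponent and enlarges the normalizing set).

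I expect the main obstacle to be the geometric estimate together with its calibration: one must track that the $i$‑th dataset of a shortest chain lies at distance at most $i-1$ (not $i$) from $\cS$ so that the $e^{\beta(i-1)}$ factors sum to the right geometric series, and then recognize that the free parameter $\beta$ should be set to $\beta^*(t/(e-1))$ precisely so that $\SS_\beta/\beta$ collapses to a clean multiple of $t$. Everything after that — invoking \propref{exponentialmechanismutility} and bookkeeping the constants — is routine, modulo the harmless factor of $2$.
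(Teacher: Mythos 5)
Your proof is correct and rests on the same skeleton as the paper's: the telescoping chain bound $|x-T(\cS)|\le\sum_i \LS(T,\cS_i)$, the smooth-sensitivity inequality $\LS(T,\cS_i)\le e^{\beta i}\SS_\beta(T,\cS)$, the geometric sum together with $e^\beta-1\ge\beta$, the calibration $\beta=\beta^*(t/(e-1))$, and a final appeal to \propref{exponentialmechanismutility} with $\sensu=1$ and $\mathrm{OPT}=0$. Where you diverge is the probabilistic bookkeeping: the paper conditions on the event $K\beta\le 1$, linearizes $e^{K\beta}$ by convexity to get $|x-T(\cS)|\le K(e-1)\SS_\beta$ on that event, and then splits the error probability into two terms that the choice of $\beta$ balances, which is exactly where its factor $2$ comes from; you instead note that with the same $\beta$ the inclusion $\{\uadmis(x,\cS)\ge -1/\beta\}\subseteq\{|x-T(\cS)|\le t\}$ holds deterministically (plug $k=1/\beta$ directly into the geometric bound, no linearization or conditioning needed), so a single application of \propref{exponentialmechanismutility} suffices and the theorem's factor $2$ is pure slack. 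Your route is therefore marginally tighter and cleaner. One caveat you share with the paper rather than resolve: \propref{exponentialmechanismutility} is stated with the strict sublevel set $\{x\colon u(x,\cS)>\mathrm{OPT}-t\}$ while \thmref{utility} defines $H_\gamma$ with a weak inequality, and the factor $2$ does not in general absorb that mismatch; it is genuinely harmless when $\uadmis$ is integer-valued and $\gamma$ is not an integer (the two sets then coincide), which covers the paper's instantiations, and the paper's own proof makes the same silent identification.
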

As a corollary of the previous theorem we obtain the following
high-probability bound.
\begin{restatable}{corollary}{deltaBoundCoro}{\label{coro:delta-bound}}
Let $H_t$ be as in \thmref{utility} and fix $\eta >0$.  Assume
$\lambda(H_t) \geq C t$ for some constant $C >0$.  Let $\beta_{\exp}
= \frac{\epsilon}{4 W\left(\frac{\epsilon (b-a)}{ C \eta}\right)}$, where $W$
is the main branch of the Lambert function\footnote{The Lambert function is the
inverse of the function $x \mapsto x e^x$}. Then with probability at least
$1-\eta$:
\begin{equation} \eqlabel{exp_guarantee}
|x - T(\cS)| < 4 (e-1) \frac{\SS_{\beta_{exp}}}{\epsilon} W \left(\frac{\epsilon (b-a)}{C \eta}\right).
\end{equation}
\end{restatable}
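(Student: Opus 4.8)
The plan is to run the tail bound of \thmref{utility} ``backwards'': pick the free parameter so that the right-hand side of \eqref{error_bound} equals the target failure probability $\eta$, and then read off the corresponding threshold $t$. Although \thmref{utility} is phrased with $t$ as the variable and $\gamma = \tfrac1{2\beta^*(t/(e-1))}$ derived from it, the assignment $t \mapsto \gamma$ is a continuous increasing bijection of $\reals_+$ onto itself: $\beta^*$ is the inverse of the strictly decreasing, continuous, surjective map $\beta \mapsto \SS_\beta/\beta$ (which decays to $0$ as $\beta\to\infty$ and blows up like $\GS/\beta$ as $\beta\to 0^+$), so $\beta^*$ inherits these properties, and composing with $1/(2\cdot)$ only reverses monotonicity. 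Hence I may equivalently treat $\beta := \tfrac1{2\gamma}$ as the free parameter. Unwinding the definition of $\beta^*$ as the functional inverse of $\beta\mapsto\SS_\beta/\beta$ turns $\beta^*(t/(e-1)) = \beta$ into $\SS_\beta/\beta = t/(e-1)$, i.e.
\[
  t \;=\; (e-1)\,\frac{\SS_\beta}{\beta}.
\]

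Next I would substitute the hypothesis $\lambda(H_\gamma)\ge C\gamma$ into \eqref{error_bound} to obtain the measure-free bound
\[
  P\big(|x - T(\cS)| > t\big) \;\le\; \frac{2(b-a)}{C\gamma}\,e^{-\epsilon\gamma/2},
\]
and then choose $\gamma$ to make the right-hand side equal to $\eta$. Writing $u = \epsilon\gamma/2$, that equation is $\tfrac{\epsilon(b-a)}{Cu}\,e^{-u} = \eta$, i.e.\ $u\,e^{u} = \tfrac{\epsilon(b-a)}{C\eta}$; since $\tfrac{\epsilon(b-a)}{C\eta} > 0$ this is solved on the principal branch by $u = W\!\big(\tfrac{\epsilon(b-a)}{C\eta}\big)$, hence $\gamma = \tfrac2\epsilon W\!\big(\tfrac{\epsilon(b-a)}{C\eta}\big)$ and $\beta = \tfrac1{2\gamma} = \tfrac{\epsilon}{4W(\epsilon(b-a)/(C\eta))} = \beta_{\exp}$, exactly the value in the statement. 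Plugging this $\beta$ into the displayed formula for $t$ gives $t = 4(e-1)\tfrac{\SS_{\beta_{\exp}}}{\epsilon}\,W\!\big(\tfrac{\epsilon(b-a)}{C\eta}\big)$, and for this $t$ we have just shown $P(|x - T(\cS)| > t)\le\eta$. The strict inequality in \eqref{exp_guarantee} then follows by left-continuity: $\{|x - T(\cS)|\ge t\} = \bigcap_{t'<t}\{|x-T(\cS)|>t'\}$, so $P(|x-T(\cS)|\ge t) = \lim_{t'\uparrow t}P(|x-T(\cS)|>t')$, and the bound at $t'$ (with its own $\gamma(t')$, which varies continuously since $\beta^*$ and $\SS_\bullet$ are continuous) tends to $\eta$ as $t'\uparrow t$; hence $P(|x-T(\cS)|<t)\ge 1-\eta$.

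The one place that needs care is the bookkeeping around $\beta^*$ together with the Lambert identity: I must make sure that the $\gamma$ I solve for when killing the tail probability is the \emph{same} $\gamma$ that, via $t = (e-1)\SS_{1/(2\gamma)}\big/(1/(2\gamma))$, produces the threshold claimed in the corollary --- i.e.\ that substituting $\beta = 1/(2\gamma)$ genuinely yields the smooth sensitivity $\SS_{\beta_{\exp}}$ appearing in \eqref{exp_guarantee}. Granting this, the remaining content is two applications of $W(z)e^{W(z)} = z$ (once to identify $\beta_{\exp}$, once to verify the failure probability is $\le\eta$) plus the elementary algebra above, none of which is delicate. It is also worth noting that the ``spread'' assumption $\lambda(H_t)\ge Ct$ is precisely what the median and mode step-function instantiations of \secref{first-insts} satisfy (with $C$ controlled by the gaps between order statistics, resp.\ the element multiplicities), so the corollary is not vacuous.
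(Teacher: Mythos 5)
Your proposal is correct and follows essentially the same route as the paper: substitute $\lambda(H_\gamma)\ge C\gamma$ into the bound of \thmref{utility}, set the right-hand side equal to $\eta$, solve $\frac{\epsilon\gamma}{2}e^{\epsilon\gamma/2}=\frac{\epsilon(b-a)}{C\eta}$ with the Lambert function to identify $\beta_{\exp}=\beta^*(t/(e-1))$, and invert $\beta\mapsto \SS_\beta/\beta$ to read off $t=(e-1)\SS_{\beta_{\exp}}/\beta_{\exp}$. Your added remarks on the monotone bijection $t\mapsto\gamma$ and the left-continuity step for the strict inequality are small refinements the paper leaves implicit, not a different argument.
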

Our corollary depends on the assumption that $\lambda(H_t) \geq C t$.
  A simple setting where we can estimate the value of $C$ is the following:
Let $\cS$ consist of points evenly-spaced on $[a, b]$ and
consider the problem of estimating the median. Using the \ouru
introduced in \secref{first-insts} we observe that $u_d(x, \cS) > -t$ for
$|x - T(\cS)| < \frac{t(b-a)}{n}$. In this case $C=\frac{2(b-a)}{n}$. In
general, it is hard to estimate $C$, but we believe for many typical datasets
and common statistics it will be $O(\frac{1}{n})$.

We now compare our bound with the one provided in the original smooth
sensitivity paper~\cite{nissim2007stoc}. We use the $(\epsilon, 0)$-DP mechanism
from in \propref{smoothsens-e} with $\gamma = 2$ (yielding the Cauchy
distribution). That is, we set $\beta_{smooth} = \alpha = \frac{\epsilon}{6}$
and release $\T(\cS) + \frac{\SS_{\beta_{smooth}}}{\alpha} Z$ with $Z$ sampled
from density $h(z) = \frac{1}{1 +|z|^2}$. We show in \propref{utilitysmooth} in
the appendix that, for $\eta >0$, the output $x_\SS$ of this mechanism
satisfies:
\begin{equation}
  |x_\SS - T(\cS) | \leq  \frac{6 SS_{\beta_{smooth}}}{\epsilon}
  \tan\left(\frac{\pi(1-\eta)}{2} \right)
  \eqlabel{smoothsensitivity}
\end{equation}
with probability $1 - \eta$. In general this bound is incomparable to
the one for \ouru since $\beta_{smooth}$ could be greater or smaller
than $\beta_{exp}$, depending on the value of $\epsilon$ and
$\eta$. Nonetheless, the dependency on $\eta$ in \eqref{exp_guarantee}
is almost logarithmic. On the other hand as $\eta \rightarrow 0$ it is
not hard to see that $\tan(\frac{\pi(1 - \eta)}{2})$ is in
$\Omega(\frac{1}{\eta})$. Thus we can guarantee that our bound is
exponentially better for small values of $\eta$. 

We want to emphasize that the significance of Corollary~\ref{coro:delta-bound}
is in: 1) establishing, to our knowledge, the first-ever connection between EM
and smooth sensitivity, and 2) showing that there exists an $(\epsilon, 0)$-DP
mechanism with noise on the scale of the smooth sensitivity and
faster-than-polynomial tail decay. Another important thing we want to point out
is that, while the choice of $\beta_{smooth}$ affects not only the utility
guarantees but the privacy of the smooth sensitivity mechanism, for \ouru the
$\beta_{exp}$ is only an artifact of
Corollary~\ref{coro:delta-bound}. Therefore, it might be possible to tighten the
dependency on the smooth sensitivity in
\eqref{exp_guarantee}, whereas \eqref{smoothsensitivity} cannot be
improved without violating privacy.

\section{Sampling algorithms}
\seclabel{algorithms}
One of the main disadvantages of the EM is that, depending on
the shape of the utility function, sampling may not be straightforward or may be
too costly in practice~\cite{eigenvectors, BassilyERM}. In this section we take
advantage of the fact that \ouru takes values in $\mathbb{N}_0$ to provide a
sampling algorithm for it.

Let $A_k = \uadmis^{-1}(-k, \cS) \subset \output$ represent the portion of the output
space that has utility $-k$.  Then let $K \in \mathbb{N}_0$ be a random variable
with the following probability mass function:
\begin{equation}
\eqlabel{pk-defn}
  p_k \coloneqq P(K = k) \propto \lambda(A_k) e^{-\frac{\epsilon k}{2}}.
\end{equation}
Further, let $Z \in \output$ denote a random variable defined by the following
conditional density: $ P(Z = z \mid K = k)
= \frac{1}{\lambda(A_k)}\mathbf{1}(z \in A_k)$. We claim that the distribution
of $Z$ is the one induced by the EM. To verify this, let $z \in \output$ and let
$k_0$ be such that $z \in A_{k_0}$. Then:
\begin{align*}
  P(Z = z) &= \sum_{k\geq 0} P(Z = z \mid K = k) P(K=k) 
 = P(Z = z \mid K = k_0) P(K = k_0)  \\
  &\propto  e^{-\frac{\epsilon k_0}{2}} =
  e^{-\frac{\epsilon \uadmis(z, \cS)}{2}}.
\end{align*}  
Therefore, sampling from the EM is equivalent to sampling an index $k$
proportional to $p_k$ and then sampling uniformly an element from
$A_k$. Sampling uniformly from $A_k$ is typically trivial.  For example, for the
medians estimation task discussed in \secref{first-insts}, each $A_k$ is a
single continuous interval on the real line.  So we focus here on how to sample
$k$ proportional to $p_k$.  In order to solve this problem we need to address
two issues.

First, the domain of $K$ could technically be infinite. In practice however, the
domain of $K$ is often finite.  This is the case for the mode, median, and mean
estimation tasks described in \secref{first-insts}.  In all three of those cases
the set of values \ouru can take is a linear function of the number of elements
in the dataset.  So we focus here on developing an algorithm for the case
where \ouru only takes on a finite number $N$ of values.

The other issue we have to deal with is numeric instability---the probabilities
$p_k$ are numerically equal to $0$ even for moderate values of $k$.  Thus, a
na\"ive sampling of $K$ could ignore the tail values of the distribution. To
address this issue we use a \emph{racing} algorithm\footnote{This algorithm is
from unpublished work by Ilya Mironov, in which he considered the numerical instability issues of estimating medians via the EM.} that samples $N$
random variables and then chooses the smallest one. The algorithm is inspired by
the reservoir sampling algorithm of~\cite{reservoir} and depends on the fact
that the minimum of a collection of exponential random variables also follows an
exponential distribution. The proof that \algmref{sampling} is correct hinges
on \propref{sampling}, which is proven in the appendix.
\begin{restatable}{proposition}{samplingprop}
  \proplabel{sampling}
  Let $Z_k = \log \log \frac{1}{U_k} - \log p_k$ and $K = \argmin Z_k$, where
  $U_k$ are independent and uniformly distributed over $[0,1]$.  Then $P(K = k) \propto
  p_k$.
\end{restatable}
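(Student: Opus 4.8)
The plan is to reduce the claim to the classical \emph{exponential race} identity: the minimum of independent exponential random variables with rates $p_1,\dots,p_N$ is attained at index $k$ with probability $p_k/\sum_j p_j$. Two ingredients are needed: a change of variables turning the $U_k$ into suitably scaled exponentials, and the race identity itself.

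First I would transform the $U_k$. Since $U_k\sim\text{Unif}[0,1]$, the variable $E_k \coloneqq \log\frac{1}{U_k} = -\log U_k$ is standard exponential, $E_k\sim\text{Exp}(1)$, and the $E_k$ remain independent. Hence $Z_k = \log E_k - \log p_k = \log\!\bigl(E_k/p_k\bigr)$. Because $t\mapsto\log t$ is strictly increasing on $(0,\infty)$, the minimizing index is unchanged under it, so $K=\argmin_k F_k$ where $F_k \coloneqq E_k/p_k$. A one-line computation gives $P(F_k>t)=P(E_k>p_k t)=e^{-p_k t}$, i.e.\ $F_k\sim\text{Exp}(p_k)$, with independence preserved.

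Then I would prove the race identity for the $F_k$. Conditioning on the value of $F_k$ and using independence,
\[
P(K=k) = P\bigl(F_k<F_j \text{ for all } j\neq k\bigr) = \int_0^\infty p_k e^{-p_k t}\prod_{j\neq k} e^{-p_j t}\,dt = \frac{p_k}{\sum_{j=1}^N p_j},
\]
which is precisely $P(K=k)\propto p_k$. The ties $\{F_k=F_j\}$ have probability zero, so $\argmin$ is almost surely well defined and the events $\{F_k<F_j\ \forall\, j\neq k\}$ partition the sample space up to a null set, justifying the first equality.

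There is no real obstacle here; the only points requiring care are (i) verifying the distributional transformations and that independence is inherited at each step, and (ii) the measure-zero-ties argument ensuring $\argmin$ is well defined. If some $p_k=0$ (because $\lambda(A_k)=0$) then $Z_k=+\infty$ almost surely, so that index is never selected, consistent with $P(K=k)=0\propto p_k$; equivalently one simply restricts attention to the support of the mass function from the outset.
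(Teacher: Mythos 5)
Your proof is correct and takes essentially the same route as the paper's: rewrite $Z_k=\log\bigl(\log(1/U_k)/p_k\bigr)$, note that the logarithm is monotone so the argmin is that of the variables $\log(1/U_k)/p_k\sim\mathrm{Exp}(p_k)$, and conclude by the exponential race identity. The only cosmetic difference is that you evaluate the race probability with a single direct integral, while the paper routes it through two small lemmas (minimum of independent exponentials is exponential with the summed rate, and $P(Z_1\leq Z_2)=\lambda_1/(\lambda_1+\lambda_2)$); your extra remarks on ties and on indices with $p_k=0$ are fine but not essential.
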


\begin{algorithm}[ht!]
  \caption{Numerically stable sampling}
  \algmlabel{sampling}
  \begin{algorithmic}
    \STATE{\textbf{Input}: Weights  $p_k = \lambda(A_k) e^{-\frac{k
          \epsilon}{2}}$ for $k =0, \ldots, N-1$}
    \STATE{\textbf{Output}: Random variable $K$ s.t. $P(K = k) \propto p_k$}
    \STATE{Sample $N$ uniform r.v.s $U_0, \ldots, U_{N-1}$}
     \STATE{Let $Z_k = \log \log\frac{1}{ U_k} - \log \lambda(A_k) + \frac{
     \epsilon k}{2}$}
     \STATE{Return $\argmin Z_k$}
   \end{algorithmic}
\end{algorithm}

\section{Experiments}
\seclabel{experiments}
In this section we empirically demonstrate the practical advantages of \ouru for
the task of computing medians.  Code to reproduce all experiments is provided in
the supplementary material.

As baselines we use three mechanisms which scale noise according to the smooth
sensitivity.
\begin{itemize}[topsep=0pt,noitemsep,leftmargin=*]
\item \ssmeche: $(\epsilon, 0)$-DP method of \propref{smoothsens-e}, with $\gamma$ set to $2$ (giving the Cauchy distribution).
\item \ssmeched: $(\epsilon, \delta)$-DP method of \propref{smoothsens-ed}. The parameters $\alpha$ and $\beta$ are chosen to minimize $\frac{\SS_\beta} {\alpha}$ under the constraints of \propref{smoothsens-ed}.
\item  \loglap: Recently proposed by~\cite{bun2019nips}, this method
releases $T(S) + \frac{SS_\beta(T, S)}{\alpha} X e^{\sigma Y}$ where
$X \sim \text{Lap}(0,1)$ and $Y \sim \text{N}(0, \sigma)$. Parameters
$\alpha, \beta$, and $\sigma$ are chosen optimally based
on~\cite{bun2019nips}. This mechanism yields
$\frac{\epsilon^2}{2}$-\emph{concentrated} DP.  We use Lemma 9 from~\cite{bun2019nips}
to map to an $(\epsilon, \delta)$-DP guarantee.
\end{itemize}
We also ran preliminary experiments comparing with the propose-test-release
mechanism of~\cite{avella2020propose}, but it was not a strong competitor; even
when allowed a high failure rate (e.g., a return value of ``no result'' for 50\%
of queries), its error on the remaining results was substantially higher than
that of \ouru.

\subsection{Synthetic data}

We fix the dataset size at $|\cS| = 1000$, and generate data from three
distributions: 1) $N(0, 1)$, the zero-mean, unit-variance Normal, 2) $U(0, 1)$,
uniform on $[0, 1]$, and 3) $\mathcal{B}(0.5, 0.5)$, the bimodal Beta.  After
computing the true median, $T(\cS)$, we truncate the data to a reasonable range,
$\output$, effectively imposing limits on the magnitude of user contributions.
In the case of $N(0, 1)$ we set $\output = [-10, 10]$, and otherwise we set
$\output = [0, 1]$.  We note that the exact values of $\output$ are not a
significant factor in the performance of \ouru, but if we loosen these limits
then it dramatically increases the average error of \ssmech and \loglap.  This
is because in many settings a substantial fraction of these methods' probability
mass before clipping to $\output$ lies outside of this range.  Hence, our tight
setting of $\output$ bounds is actually an advantage that competing methods
might not have in a more realistic setting.

We generate $100$ datasets of each type, and call each privacy mechanism $100$
times per dataset.  We compute the average difference between the true median
and the mechanism's estimate, $|T(\cS) - \M(\cS)|$, and average this value for
each dataset.  In \figref{synthetic-exp} we plot the average of these average
errors, with error bars representing the standard deviation across datasets.
(Note that the plots have a log scale on the y-axis, which accounts for the
error bars looking non-symmetric.)

\begin{figure}[ht]
\begin{center}
  \centerline{\includegraphics[width=\textwidth]{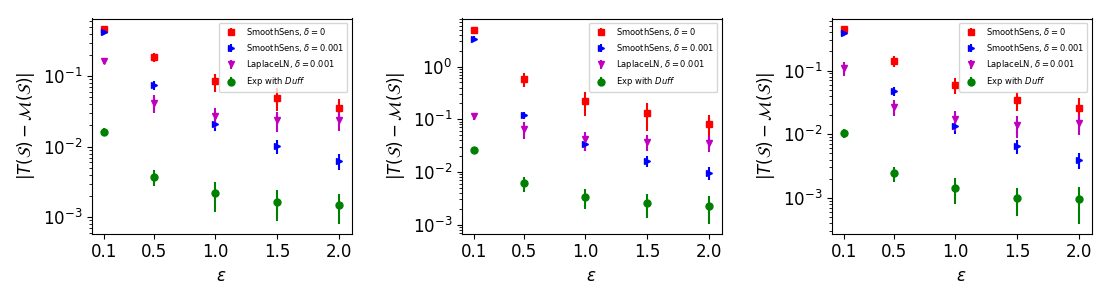}}

  \caption{Average $|T(\cS) - \M(\cS)|$ for various settings of $\epsilon$, with
standard deviation across datasets as error bars.  From left to right, the
data sources for the plots are: $N(0, 1)$, $U(0, 1)$, and $\mathcal{B}(0.5,
0.5)$.} \figlabel{synthetic-exp}


\end{center}
\end{figure}

We observe that \ouru has the smallest average error in all experiments.  It far
outperforms \ssmech in the $\delta=0$ setting; for example, for the $N(0, 1)$
data, the errors of \ouru are a factor of 187 times smaller for $\epsilon=0.1$
and 34 times smaller for $\epsilon=2$.  More remarkably, even if we
allow \ssmech a non-zero $\delta$ value, it still does not reach the performance
of \ouru.  \ouru has errors a factor of 130 ($\epsilon=0.1$) to 4 $(\epsilon=2)$
times smaller than those of \ssmech with the reasonable $\delta$ of $1/|\cS| =
0.001$.  Comparing to the more recent \loglap method, we still see that \ouru
has errors a factor of 4 ($\epsilon=0.1$) to 15 ($\epsilon=2$) times smaller.

Runtimes of \ouru and \ssmech with $\delta = 0$ are similar, and the big-O cost
of all methods is dominated by the required $O(n \log n)$ step of sorting the
data.  However, the optimal parameter search for \ssmech and \loglap with
$\delta > 0$ requires substantial extra time.  In practice, it took
$15$ to $20$ times longer to run these methods.

\subsection{Real data}

\begin{figure*}[!h]
\begin{center}
  \centerline{
  \adjincludegraphics[width=0.3\columnwidth,trim={0 0 {0.83\width} 0},clip]{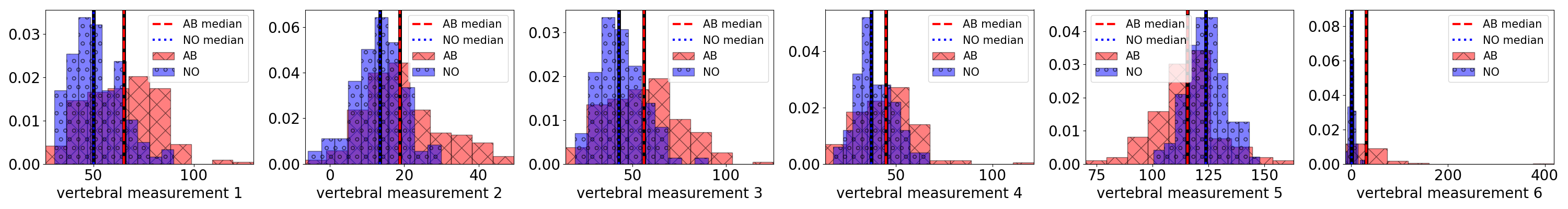}
    \qquad \adjincludegraphics[width=0.3\columnwidth,trim={0 0 {0.83\width} 0},clip]{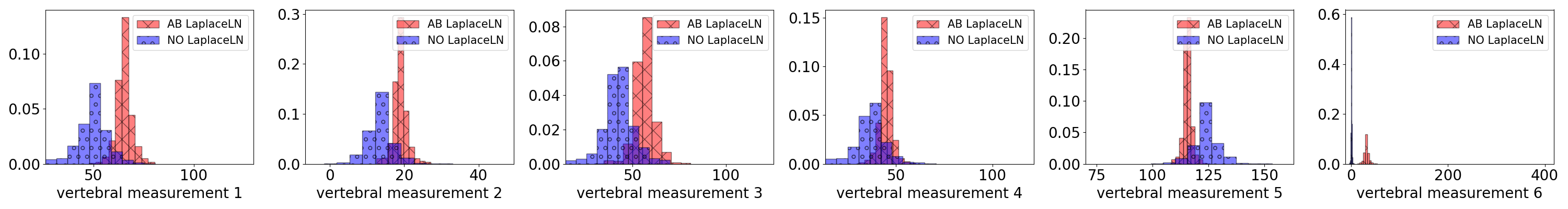}
    \qquad \adjincludegraphics[width=0.3\columnwidth,trim={0 0 {0.83\width} 0},clip]{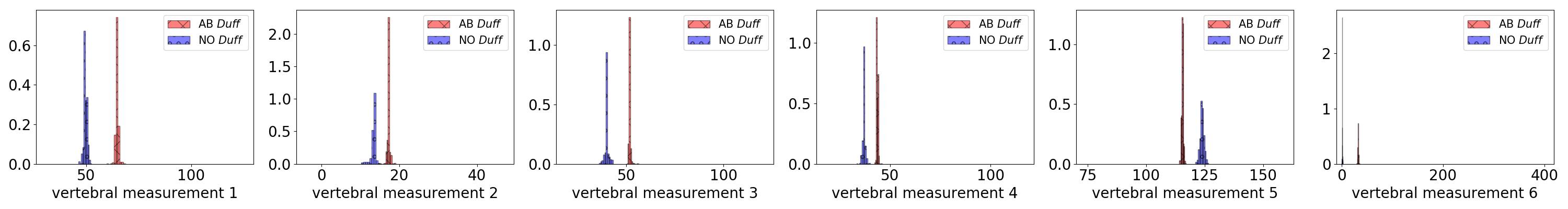}}

  \caption{From left to right, the plots are histograms of: 1) the actual data, 2) median estimates from \loglap with $\delta = 1/n$, and 3) median estimates from the EM with \ouru.} \figlabel{real-exp}

\end{center}
\end{figure*}

As a test of our method on real data, we use the Vertebral Column Dataset from
the UCI Repository~\cite{dua2019uci,mota2019uci}.  Each row in this dataset
contains six measurements related to a patient's vertebral column, and each
patient is classified as either normal (NO) or abnormal (AB).  There are
$n_{NO} = 100$ normal patients and $n_{AB} = 210$ abnormal patients.  The
leftmost plot in \figref{real-exp} shows a normalized histogram of the first
measurement for each class, with class medians denoted by a vertical bar.
Fixing $\epsilon = 0.5$ and setting the $\output$ range to the actual range
observed when the data from both classes is combined, we ran 1000 trials of DP
median estimation. The center and rightmost plots in \figref{real-exp} show
normalized histograms of the resulting estimates from \loglap with $\delta
= 1/n$ and \ouru, respectively.  Notice that the \loglap histograms for the two
classes overlap significantly.  In contrast, the \ouru histograms are much
tighter, providing more useful estimates of the class medians.  Similar plots
for the other five vertebral measurements can be found in the appendix.

\section{Conclusion}
\seclabel{conclusion}
In this work we proposed a general-purpose distance-based utility
function family, \ouru, for the exponential mechanism.  We proved that
\ouru is an affirmative answer to the open question of whether it is
possible to have differential privacy with a noise distribution whose
variance is proportional to smooth sensitivity and whose tails decay
at a faster-than-polynomial rate. This is the first time a connection
between the EM and smooth sensitivity has been studied and we believe
there are very interesting research questions that can expand this
connection..  We also provide the first ever comparison of multiple
median algorithms, demonstrating the advantages of \ouru. An
interesting open question is whether we can instantiate \ouru on other
tasks for which smooth sensitivity constitutes the state of the
art. For instance, the task of privately calculating the number of
triangles in a graph.  Please see below for a broader impact
statement.

\newpage
\section*{Broader impact}

Differential privacy has become a standard method for anonymization.  This work
presents a method that allows organizations to release private statistics such
as the median with much higher fidelity to the true statistic value while at the
same time providing greater user privacy protection.  Notably, in contrast to
the existing state-of-the-art~\cite{bun2019nips}, our method can produce more
accurate results while using no $\delta$.  Since the implications of non-zero
$\delta$ are not well-understood~\cite{mcsherry2017blog}, this may be
especially important for giving private user information the best protection.


\bibliography{biblist}
\bibliographystyle{plainnat}
\setcitestyle{numbers}

\clearpage
\appendix
\section{Additional proofs}
\subsection{Distance properties}
\begin{proposition}
  \proplabel{distance}
  The function $d \colon \bS \times \bS \rightarrow \reals_+$ of
  \defref{dataset-distance} is a metric.
\end{proposition}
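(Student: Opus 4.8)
The plan is to recast Definition~\ref{def:dataset-distance} as a shortest-path length in a graph and then verify the four metric axioms in turn. Let $G$ be the undirected graph with vertex set $\bS$ in which $\cS$ and $\cS'$ are adjacent iff $\cS' \in \N(\cS)$; this adjacency relation is symmetric, since undoing an addition is a removal and vice versa. Then $d(\cS,\cS')$ is exactly the least $m$ for which there is a walk $\cS = \cS_0, \cS_1, \ldots, \cS_m = \cS'$ with $\cS_i \in \N(\cS_{i-1})$ for every $i$ (equivalently, the largest $k$ such that every such walk has length at least $k$, which is how the definition is phrased). So the first step is to check that this minimum exists and is finite: for any $\cS, \cS'$ one can delete the elements of $\cS$ one at a time to reach the empty dataset and then insert the elements of $\cS'$ one at a time, producing an admissible walk of length $|\cS| + |\cS'|$. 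Hence the set of admissible walk lengths is a nonempty subset of $\naturals_0$, so it has a minimum; in particular $d$ is well defined and $d(\cS,\cS') \geq 0$.

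Next I would dispatch identity of indiscernibles. If $\cS = \cS'$, the length-$0$ walk witnesses $d(\cS,\cS') = 0$. Conversely, if $\cS \neq \cS'$ then no length-$0$ walk connects them (its endpoints coincide), so every admissible walk has $m \geq 1$ and therefore $d(\cS,\cS') \geq 1 > 0$.

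For symmetry, take a shortest admissible walk from $\cS$ to $\cS'$, of length $d(\cS,\cS')$; its reversal $\cS' = \cS_m, \cS_{m-1}, \ldots, \cS_0 = \cS$ is again admissible because $\N(\cdot)$ is symmetric, giving $d(\cS',\cS) \leq d(\cS,\cS')$, and the symmetric argument yields the reverse inequality, hence equality. For the triangle inequality, concatenate a shortest admissible walk from $\cS$ to $\cS'$ with a shortest admissible walk from $\cS'$ to $\cS''$: the concatenation is an admissible walk from $\cS$ to $\cS''$ of length $d(\cS,\cS') + d(\cS',\cS'')$, so $d(\cS,\cS'') \leq d(\cS,\cS') + d(\cS',\cS'')$.

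I do not expect a genuine obstacle: this is the standard fact that the geodesic distance on a connected graph is a metric. The only two points that deserve a sentence of care are (i) verifying that $G$ is connected (via the delete-then-insert walk above) so that $d$ takes finite values, and (ii) translating the universally-quantified wording of Definition~\ref{def:dataset-distance} into the ``minimum admissible walk length'' form before running the arguments.
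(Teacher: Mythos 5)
Your proof is correct and follows essentially the same route as the paper: the triangle inequality is obtained by concatenating the two witnessing sequences of neighboring datasets, which is exactly the paper's argument. The only difference is one of completeness—you explicitly verify finiteness/connectedness, identity of indiscernibles, symmetry, and the translation of the definition's wording into a shortest-walk length, all of which the paper dismisses as trivial, proving only the triangle inequality.
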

\begin{proof}
  The fact that $d$ is positive and symmetric is trivial. Therefore, we focus on
  proving that it satisfies the triangle inequality. Let $\cS$, $\cS'$ and
  $\cS''$ be three data sets. Let $d(\cS, \cS'') = k_1$ and $d(\cS'', \cS') =
  k_2$. Now let $\cS = \cS_0 \ldots, \cS_{k_1} = \cS''$ and $\cS'' = \cS_{k_1} ,
  \ldots, \cS_{k_1 +k_2}= \cS'$ be the sequence of datasets defining $d(\cS,
  \cS'')$ and $d(\cS'', \cS')$. Since $S_0, \ldots, S_{k_1+k_2}$ defines a
  sequence of neighbors starting at $\cS$ and ending at $\cS'$, by definition we
  must have $d(\cS, \cS') \leq k_1 +k_2 = d(\cS, \cS'') + d(\cS'', \cS')$.
\end{proof}

\subsection{Sensitivity of \ouru}
\begin{lemma}
  \lemlabel{ourusens}
  Given any statistic $T$, let $u_d$ be defined as
  \begin{equation*}
    u_d(x, \cS) = -\min_{\cS' \colon T(\cS') = x} d(\cS, \cS').
  \end{equation*}
  Then $\Delta_{u_d}$, the sensitivity of $u_d$, is less than $1$.
\end{lemma}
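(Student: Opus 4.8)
The plan is to unfold the definition of sensitivity and reduce the statement to a one-step Lipschitz estimate. Recall that $\Delta_{u_d} = \max_{\cS}\max_{\cS'' \in \N(\cS)}\max_{x \in \output}|u_d(x,\cS) - u_d(x,\cS'')|$, so it suffices to fix an arbitrary pair of neighbors $\cS, \cS''$ and an arbitrary output $x$, and show $|u_d(x,\cS) - u_d(x,\cS'')| \le 1$. The key observation is that $-u_d(x,\cS) = \min_{\cS' : T(\cS') = x} d(\cS, \cS')$ is exactly the point-to-set distance from $\cS$ to the set $\A_x \coloneqq \{\cS' \in \bS : T(\cS') = x\}$ under the metric $d$, and point-to-set distances are always $1$-Lipschitz in their first argument.

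Concretely, I would argue as follows. For any $\cS' \in \A_x$, the triangle inequality for $d$ (established in \propref{distance}) gives $d(\cS, \cS') \le d(\cS, \cS'') + d(\cS'', \cS')$. Taking the minimum over $\cS' \in \A_x$ on both sides yields $-u_d(x,\cS) = \min_{\cS' \in \A_x} d(\cS,\cS') \le d(\cS,\cS'') + \min_{\cS' \in \A_x} d(\cS'',\cS') = d(\cS,\cS'') - u_d(x,\cS'')$, i.e. $u_d(x,\cS'') - u_d(x,\cS) \le d(\cS, \cS'')$. Swapping the roles of $\cS$ and $\cS''$ gives the reverse inequality, so $|u_d(x,\cS) - u_d(x,\cS'')| \le d(\cS,\cS'')$. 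Finally, since $\cS$ and $\cS''$ are neighbors, there is a length-one sequence of neighbors joining them, so $d(\cS,\cS'') \le 1$; combining, $|u_d(x,\cS) - u_d(x,\cS'')| \le 1$, which is the claim.

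The only genuinely delicate point—and really the only thing to be careful about—is the edge case where $\A_x = \emptyset$, so that $-u_d(x,\cdot)$ is a minimum over an empty set. Since $\A_x$ depends only on $x$ and $T$, not on the dataset, this case makes $u_d(x,\cS)$ and $u_d(x,\cS'')$ simultaneously undefined (or both $-\infty$ by convention), and it should simply be excluded from, or vacuously handled in, the outer maximum defining $\Delta_{u_d}$; for every $x$ in the effective output space of $T$ the set $\A_x$ is nonempty and the argument above applies verbatim. Everything else is just the triangle inequality plus the definition of neighbor, so I do not anticipate any real obstacle beyond stating this convention cleanly.
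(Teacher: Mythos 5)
Your proposal is correct and is essentially the same argument as the paper's: both prove the one-sided bound $u_d(x,\cS) \geq u_d(x,\cS'') - 1$ via the triangle inequality for $d$ (using the minimizer for $\cS''$ as a feasible point for $\cS$, which is exactly your ``point-to-set distance is $1$-Lipschitz'' observation) and then conclude by symmetry and $d(\cS,\cS'')\le 1$ for neighbors. Your remark about the degenerate case $\{\cS' : T(\cS')=x\}=\emptyset$ is a reasonable extra precaution that the paper silently glosses over, but it does not change the substance of the argument.
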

\begin{proof}
  Let $\cS'' \in \mathcal{N}(\cS)$. We will show that $u_d(x, \cS) \geq u_d(x,\cS'') - 1$, and then the inequality $u_d(x, \cS'') \geq u_d(x, \cS) - 1$ will follow simply by 
symmetry on the choice of $\cS''$ and $\cS$.  The combination of these two implies $u_d(x, \cS'') - 1 \leq u_d(x, \cS) \leq u_d(x, \cS'') + 1$, which yields the statement of the lemma.

Let $S_1 \in \argmin_{\cS' : T(\cS') = x} d(\cS'', \cS')$. Starting from the definition of $u_d$:
  \begin{align*}
    u_d(x, \cS) &= -\min_{\cS' \colon T(\cS') = x} d(\cS, \cS') \\
    &\geq -d(\cS, \cS_1) \\
    & \geq -d(\cS, \cS'') -d(\cS'', \cS_1)\\
    &= -1 + u_d(x, \cS'').
  \end{align*}
  The first inequality follow from the fact that $S_1$ is a feasible dataset for the optimization problem.  The second inequality follows from the triangle inequality on $d$.  The final equality follows from the definition of $\cS_1$ as a minimizer.
  \end{proof}

\subsection{Proof of \propref{exponentialmechanismutility}}

\expmechprop*
      
\begin{proof}
  Let $H_t^c = \{x \colon u(x, \cS) \leq \textrm{OPT} - t\}$, then
  \begin{align*}
    P(u(x, \cS) \leq \textrm{OPT} - t) &= P(H_t^c)  \leq \frac{P(H_t^c)}{P(H_{t/2})} \\
    &=\frac{\int_{H_t^c} e^{\frac{\epsilon u(x, \cS)}{2\sensu}} dx}{\int_{H_{t/2}} e^{\frac{\epsilon u(x, \cS)}{2\sensu}} dx} \\
    &\leq \frac{e^{\frac{\epsilon(\textrm{OPT} - t)}{2 \sensu}}\int_{H_t^c} dx}{e^{\frac{\epsilon(\textrm{OPT - t/2})}{2 \sensu}} \int_{H_{t/2}} dx} \\
    &\leq \frac{e^{-\frac{\epsilon t}{4 \sensu}}(b-a)}{\lambda(H_{t/2})}.
  \end{align*}
\end{proof}

\subsection{Proof of \thmref{utility}}

\thmutility*

\begin{proof}
We begin by bounding the difference between $x$ and $T(\cS)$ in terms of local
sensitivity. Let $\cS^*$ denote a dataset achieving the minimum in the
definition of $\uadmis(x, \cS)$. Let $K$ be a random variable given by $d(\cS,
\cS^*)$.  By definition of $d$, there exists a sequence of neighboring datasets
$\cS = \cS_0, \ldots, \cS_K = \cS^*$. Therefore:
  \begin{align*}
    |x - T(\cS)| = |T(\cS) - T(\cS^*)| 
     = \left|T(\cS) - T(\cS^*) + \sum_{i = 1}^{K-1} T(\cS_i) - \sum_{i = 1}^{K-1} T(\cS_i)\right|
    \end{align*}
Rearranging summands we get:
\begin{align*}
  \left|\sum_{i = 0}^{K-1} (T(\cS_i) - T(\cS_{i+1}))\right|
  \leq \sum_{i=0}^{K-1} |T(\cS_i) - T(\cS_{i+1})|
  \leq \sum_{i=0}^{K-1} \LS(T, \cS_i).
  \end{align*}
where the first inequality follows from the triangle inequality and the second
from the definition of local sensitivity (\defref{local-sens}).
  
We also know from \defref{smooth-sens} that, for any $\beta >0$
and $i \in \naturals$: $\LS(T, \cS_i) \leq e^{\beta i} \SS_\beta(T,
\cS)$. Thus, conditioned on $K \beta \leq 1$, we have:
\begin{align*}
  |x - T(\cS)| & \leq \SS_\beta(T, \cS) \sum_{i=0}^{K-1} e^{\beta i}
                 = \SS_\beta(T, \cS) \frac{e^{K\beta} - 1}{e^\beta -
                 1} \\
 & \leq \frac{(e^{K\beta} -1) \SS_\beta(T, \cS)}{\beta} 
  \leq \frac{\SS_\beta(T, \cS)( K \beta e + (1 - K\beta) -1)}{\beta} \\
  & = K(e - 1)\SS_\beta(T, \cS) 
    = d(\cS, \cS^*)(e - 1)\SS_\beta(T, \cS)
    = -\uadmis(x, \cS)(e - 1) \SS_\beta(T, \cS),
\end{align*}
where the first equality is an application of the geometric summation formula,
and the third inequality follows from convexity of the function $x \mapsto e^x$
and the condition that $K \beta \leq 1$.  We can now bound the error of our
mechanism.
\begin{align}
  P(|x - T(\cS)| > t)  
  &\leq  P(|x - T(\cS) | > t \ \wedge K\beta \leq 1) + P(K\beta > 1)
    \nonumber \\
 &\leq P\left(\uadmis(x, \cS)(e-1)\SS_\beta(T, \cS) < -t\right) 
  + P\left(\uadmis(x, \cS)\beta < -1\right) \nonumber\\
  &=P\left(\uadmis(x, \cS) < \frac{-t}{(e - 1)\SS_\beta(T, \cS)}\right) 
+ P\left(\uadmis(x, \cS) < -\frac{1}{\beta}\right). \eqlabel{prob-bound}
\end{align}
Notice that the above inequality holds for every value of $\beta$ and that the
summands introduce a trade-off. The larger $\beta$ is, the smaller $\SS_\beta$
is, and therefore the smaller the probability of the first event. On the other
hand, a larger value of $\beta$ makes the probability of the second
term larger. Letting $\beta = \beta^*\left(\frac{t}{e-1}\right)$  makes
both events equally likely.  Then \eqref{prob-bound} becomes:
\begin{equation*}
  P(|x-T(\cS)| > t) \leq 2 P\left(\uadmis(x, \cS) < -\frac{1}{\beta^*(t/(e-1))}\right).
\end{equation*}
The result now follows from \propref{exponentialmechanismutility} and the fact
that the sensitivity of $\uadmis$ is $1$: $\Delta_{\uadmis} = 1$.
\end{proof}

\deltaBoundCoro*

\begin{proof}
Using the fact that $\lambda(H_{t/2}) \geq \frac{Ct}{2} $ we can bound the probability in \thmref{utility}:
\begin{equation*}
  P\left(|x - T(\cS)| > t\right) \leq  \frac{2 \exp\left(-\frac{\epsilon \gamma}{2}\right) (b - a)}{C \gamma}
\end{equation*}
Setting the righthand side of the above inequality to $\eta$ and rearranging
terms yields:
\begin{equation*}
\frac{\epsilon \gamma}{2}\exp\left(\frac{\epsilon \gamma}{2}\right) = \frac{\epsilon (b-a)}{C \eta}.
\end{equation*}
Applying the Lambert function, we have:
\begin{equation*}
\frac{\epsilon \gamma}{2} = W\left(\frac{\epsilon (b-a)}{C \eta}\right).
\end{equation*}
Expanding the definition of $\gamma$, this implies:
\begin{equation*}
  \beta^*\left(\frac{t}{e-1}\right) = \frac{\epsilon}{4 W\left(
    \frac{\epsilon (b-a)}{ C \eta}\right)}.
\end{equation*}
Since $\beta^*$ is the inverse function of
$\beta \mapsto \frac{\SS_\beta}{\beta}$, we arrive at the following expression
for $t$:
\begin{equation*}
t = \frac{(e - 1) \SS_{\beta_{exp}}}{\beta_{exp}}.
\end{equation*}
\end{proof}

\subsection{Utility guarantee for the smooth sensitivity mechanism}

\begin{proposition}
  \proplabel{utilitysmooth}
  Let $\gamma > 0$ and let $\mathcal{M}$ be a mechanism that releases $x = T(\cS) + \gamma Z$ where $Z$ is sampled from density $h(z) \propto \frac{1}{1 + |z|^2}$.  Then with probability $1 - \eta$:
  \begin{equation*}
    |x - T(\cS)| < \gamma \tan\left(\frac{\pi}{2}(1 - \eta)\right).
  \end{equation*}
  \end{proposition}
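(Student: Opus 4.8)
The plan is to recognize that $h$ is exactly the density of a standard Cauchy random variable and then read off the required tail probability from its cumulative distribution function. First I would fix the normalization constant: since $\int_{-\infty}^{\infty}\frac{1}{1+z^2}\,dz = \pi$, we have $h(z) = \frac{1}{\pi(1+z^2)}$, and hence the CDF is $F(z) = \tfrac12 + \tfrac{1}{\pi}\arctan(z)$.

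Next I would translate the event of interest. Since $x - T(\cS) = \gamma Z$, the event $|x - T(\cS)| < \gamma c$ is precisely the event $|Z| < c$, where I set $c \coloneqq \tan\!\big(\tfrac{\pi}{2}(1-\eta)\big)$. By symmetry of $h$ about $0$, $P(|Z| < c) = F(c) - F(-c) = \tfrac{2}{\pi}\arctan(c)$.

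Then I would substitute $c$ and use the identity $\arctan(\tan\theta) = \theta$ valid for $\theta \in (-\pi/2,\pi/2)$. For a failure probability $\eta \in (0,1)$ we have $\theta = \tfrac{\pi}{2}(1-\eta) \in (0,\pi/2)$, so $\arctan(c) = \tfrac{\pi}{2}(1-\eta)$ and therefore $P(|Z| < c) = \tfrac{2}{\pi}\cdot\tfrac{\pi}{2}(1-\eta) = 1-\eta$. This yields exactly $P\big(|x - T(\cS)| < \gamma\tan(\tfrac{\pi}{2}(1-\eta))\big) = 1-\eta$, which is the claim (in fact with equality).

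The only point needing any care — and it is not really an obstacle — is verifying the domain condition $\tfrac{\pi}{2}(1-\eta)\in(-\pi/2,\pi/2)$ that legitimizes the $\arctan$–$\tan$ cancellation; this holds for all $\eta\in(0,2)$, and in particular throughout the meaningful range $\eta\in(0,1)$. Everything else is a one-line computation, and the proof relies only on the definition of the mechanism $\mathcal{M}$, not on any earlier result in the paper.
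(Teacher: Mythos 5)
Your proof is correct and follows essentially the same route as the paper: both compute the normalization constant $\pi$, express the tail (or equivalently the central) probability of the Cauchy variable via $\arctan$, and invert to obtain $t = \gamma\tan\bigl(\tfrac{\pi}{2}(1-\eta)\bigr)$. The only cosmetic difference is that you work with $P(|Z|<c)$ and the CDF directly, while the paper sets the tail probability equal to $\eta$ and solves for $t$; these are the same computation.
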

  \begin{proof}
    We begin by calculating the normalization constant of $h$:
    \begin{align*}
      \int_{-\infty}^\infty h(z) dz &= 2 \int_0^\infty \frac{1}{1 + z^2} dz
      = \pi.
    \end{align*}
    We can now measure the error of the mechanism:
    \begin{align*}
      P(|x - T(\cS)| > t) &= P \left(|Z| > \frac{t}{\gamma}\right)\\
      &=\frac{2}{\pi} \int_{\frac{t}{\gamma}}^\infty\frac{1}{1 + z^2}dz \\
      &= 1 - \frac{2}{\pi}\arctan\left(\frac{t}{\gamma}\right)
    \end{align*}
    Setting the righthand side to $\eta$ and solving for $t$ yields:
    \begin{equation*}
      t = \gamma \tan \left(\frac{\pi}{2}(1 - \eta)\right)
      \end{equation*}
  \end{proof}

\subsection{Proof of \propref{sampling}}

\samplingprop*

\begin{proof}
    Let $Z'_k = \frac{\log \frac{1}{U_k}}{p_k}$. From elementary statistics, we know $Z'_k$ is distributed exponential with parameter $p_k$. That is, $P(Z'_k > z) = e^{-p_k z}$.  Now, from \lemref{minexp} we know that $\min_{j \neq k} (Z'_j)$ is distributed exponential with parameter $Q = \sum_{j \neq k} p_j$. Thus, using \lemref{exponential_comp} we have:
    \begin{align*}
      P(K = k) &= P(Z_k \leq \min_j Z_j) = P(Z'_k \leq \min_j Z'_j) \\
               &= P(Z'_k \leq \min_{j\neq k} Z'_j) \\
      &= \frac{p_k}{p_k + Q} = \frac{p_k}{\sum_j p_j}.
    \end{align*}
 \end{proof}
  
\begin{lemma}
  \lemlabel{minexp}
  Let $Z_1, Z_2$ be two independent random variables from exponential distributions with parameters $\lambda_1$ and $\lambda_2$ respectively.  Then  $\min(Z_1, Z_2)$ is an exponential random variable with parameter $\lambda_1 + \lambda_2$.
\end{lemma}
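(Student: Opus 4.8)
The plan is to work directly with the survival (complementary CDF) function of the minimum, since the exponential distribution is characterized by its survival function $P(Z > z) = e^{-\lambda z}$ for $z \geq 0$.

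First I would observe that the event $\{\min(Z_1, Z_2) > z\}$ is precisely the event $\{Z_1 > z\} \cap \{Z_2 > z\}$: the minimum of two numbers exceeds $z$ if and only if both of them do. Next I would invoke independence of $Z_1$ and $Z_2$ to factor the probability of this intersection as a product, $P(Z_1 > z)\,P(Z_2 > z)$. Then I would substitute the known survival functions of the two exponentials, $e^{-\lambda_1 z}$ and $e^{-\lambda_2 z}$, obtaining $e^{-\lambda_1 z} e^{-\lambda_2 z} = e^{-(\lambda_1 + \lambda_2) z}$ for all $z \geq 0$ (and the probability is trivially $1$ for $z < 0$).

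Finally I would note that $z \mapsto e^{-(\lambda_1+\lambda_2)z}$ is exactly the survival function of an exponential random variable with parameter $\lambda_1 + \lambda_2$, and since a distribution on $\reals_+$ is determined by its survival function, this identifies the law of $\min(Z_1, Z_2)$ and completes the proof. There is no real obstacle here — the only thing to be slightly careful about is stating the characterization of the exponential law by its survival function and handling the $z < 0$ case, but both are routine. (One could alternatively differentiate to recover the density $ (\lambda_1+\lambda_2) e^{-(\lambda_1+\lambda_2)z}$, but the survival-function argument is cleaner and is all that is needed, since in the application to \propref{sampling} it is iterated to get the parameter $Q = \sum_{j \neq k} p_j$ for the minimum of finitely many independent exponentials.)
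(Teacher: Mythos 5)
Your proposal is correct and matches the paper's own proof essentially verbatim: the paper also computes $P(\min(Z_1,Z_2) \geq z) = P(Z_1 > z)P(Z_2 > z) = e^{-(\lambda_1+\lambda_2)z}$ via independence and identifies the resulting survival function. Your version is only slightly more careful, spelling out the characterization of the exponential law by its survival function and the trivial $z<0$ case.
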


\begin{proof}
  By definition we have:
  \begin{align*}
    P(\min(Z_1, Z_2) \geq z) &= P(Z_1 > z \ \textrm{and} \  Z_2 > z) \\
    &=e^{-\lambda_1 z} e^{-\lambda_2 z} = e^{-(\lambda_1 + \lambda_2) z}.
  \end{align*}
\end{proof}
  
\begin{lemma}
\lemlabel{exponential_comp}
Let $Z_1, Z_2$ be two independent random variables from exponential distributions with parameters $\lambda_1$ and $\lambda_2$ respectively.  Then:
\begin{equation*}
  P(Z_1 \leq Z_2) = \frac{\lambda_1}{\lambda_1 + \lambda_2}
\end{equation*}
\end{lemma}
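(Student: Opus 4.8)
The plan is to compute $P(Z_1 \le Z_2)$ directly, using independence to factor the joint law. First I would condition on $Z_1$: by independence,
\[
P(Z_1 \le Z_2) = \int_0^\infty \lambda_1 e^{-\lambda_1 z}\, P(Z_2 \ge z)\, dz,
\]
and then substitute the exponential survival function $P(Z_2 \ge z) = e^{-\lambda_2 z}$ (the same identity already used in the proof of \lemref{minexp}). The integrand collapses to $\lambda_1 e^{-(\lambda_1 + \lambda_2) z}$, and the elementary integral $\lambda_1 \int_0^\infty e^{-(\lambda_1+\lambda_2)z}\,dz = \frac{\lambda_1}{\lambda_1+\lambda_2}$ gives the claim. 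An equivalent route, if one prefers to stay with the joint density, is to integrate $\lambda_1\lambda_2 e^{-\lambda_1 z_1 - \lambda_2 z_2}$ over the region $\{0 \le z_1 \le z_2\}$, performing the inner $z_1$-integral first; this yields $\int_0^\infty \lambda_2 e^{-\lambda_2 z_2}\bigl(1 - e^{-\lambda_1 z_2}\bigr)\, dz_2 = 1 - \frac{\lambda_2}{\lambda_1+\lambda_2}$, the same answer.

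There is no substantive obstacle here — this is a textbook fact — so the only care needed is to keep the appeal to independence explicit when splitting the joint density (or the conditional expectation). One small remark worth including is that, since both laws are continuous, $P(Z_1 = Z_2) = 0$; hence whether the inequality in the statement is strict or non-strict is immaterial. This is precisely what lets the proof of \propref{sampling} move freely between $P(Z'_k \le \min_j Z'_j)$, the restricted minimum $P(Z'_k \le \min_{j \ne k} Z'_j)$, and the strict versions when combining this lemma with \lemref{minexp}.
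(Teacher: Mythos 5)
Your proposal is correct and is essentially the paper's own argument: conditioning on $Z_1$ and integrating $\lambda_1 e^{-\lambda_1 z}P(Z_2\ge z)\,dz$ is exactly the double integral over $\{z_1\le z_2\}$ with the inner $z_2$-integral done first, which is how the paper computes it. The added remark that $P(Z_1=Z_2)=0$ (so strict vs.\ non-strict inequalities are interchangeable) is a harmless and sensible clarification but not a different route.
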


\begin{proof}
  \begin{align*}
    P(Z_1 \leq Z_2) &= \int_0^\infty \int_{z_1}^\infty \lambda_1 \lambda_2 e^{-\lambda_1 z_1}e^{-\lambda_2 z_2} dz_2 dz_1 \\
   &= \int_0^\infty  -\lambda_1 e^{-(\lambda_1 + \lambda_2) z_1} dz_1\\
    &= \frac{\lambda_1}{\lambda_1+ \lambda_2}.
  \end{align*}
\end{proof}

\section{Additional plots for real data experiments}

\begin{figure*}[ht]
\begin{center}
  \adjincludegraphics[width=\columnwidth,trim={{0.17\width} 0 0 0},clip]{figures/vert-data-exp.png}
  \adjincludegraphics[width=\columnwidth,trim={{0.17\width} 0 0 0},clip]{figures/vert-ll-exp.png}
  \adjincludegraphics[width=\columnwidth,trim={{0.17\width} 0 0 0},clip]{figures/vert-duff-exp.png}
  \caption{From top to bottom, the plots are histograms of: 1) the actual data,
    2) median estimates from \loglap with $\delta = 1/n$, and 3) median
    estimates from the EM with \ouru.}
  \figlabel{extra-real-exp}

\end{center}
\end{figure*}

\end{document}